\algrenewcommand\algorithmicindent{1.0em}
\DeclareSymbolFont{matha}{OML}{txmi}{m}{it}
\DeclareMathSymbol{\varv}{\mathord}{matha}{118}
\newif\iftautology
\newcommand{\true}{1}
\newcommand{\false}{0}
\newcommand{\wrt}{w.r.t.\ }
\newcommand{\var}{\mathit{var}}
\newcommand{\ctrans}[1]{t}
\newcommand{\dom}{\mathit{dom}}
\newcommand{\range}{\mathit{ran}}
\newcommand{\mgu}{\mathit{mgu}\xspace}
\newcommand{\foeq}{\approx}
\newcommand{\foneq}{\not \foeq}
\newcommand{\eqaxioms}{\mathcal{E_{L}}}
\newcommand{\flatt}[1]{#1^{-}}
\newcommand{\lingeling}{\textsf{Lingeling}\xspace}
\renewcommand{\lingeling}{$\mathsf{Lingeling}\xspace$}
\newcommand{\minisat}{\textsf{MiniSat}\xspace}
\renewcommand{\minisat}{$\mathsf{MiniSat}\xspace$}
\newcommand{\glucose}{\textsf{Glucose}\xspace}
\renewcommand{\glucose}{$\mathsf{Glucose}\xspace$}
\newcommand{\abcdsat}{\textsf{abcdSAT}\xspace}
\renewcommand{\abcdsat}{$\mathsf{abcdSAT}\xspace$}
\newcommand{\fol}{FOL}
\newcommand{\foleq}{\fol$^{\foeq}$}
\newcommand{\cnf}{\mathit{cnf}}
\newcommand{\ppe}{PPE}
\newcommand{\ude}{UDE}
\newtheorem{definition}{Definition}
\newtheorem{theorem}{Theorem}
\newtheorem{lemma}[theorem]{Lemma}
\newtheorem{example}{Example}
\newcommand{\vampire}{{\sc Vampire}}
\title{Blocked Clauses in First-Order Logic\thanks{
This work has been supported by the Austrian Science Fund (FWF) under projects W1255-N23, S11408-N23, S11409-N23,
and the ERC Starting Grant 2014 SYMCAR 639270.}}
\author{Benjamin Kiesl\inst{1}, Martin Suda\inst{1}, Martina Seidl\inst{2}, Hans Tompits\inst{1}, Armin Biere\inst{2}}
\institute{Institute for Information Systems,
Vienna University of Technology, Austria \and
Institute for Formal Models and Verification, JKU Linz, Austria}
\authorrunning{Kiesl, Suda, Seidl, Tompits, and Biere}
\titlerunning{Blocked Clauses in First-Order Logic}
\begin{document}

\maketitle

\begin{abstract}
\noindent
Blocked clauses provide the 
basis for powerful reasoning techniques
used in SAT, QBF, and DQBF solving.
Their definition, which relies on a simple syntactic
criterion, guarantees that they are both redundant and easy to find.
In this paper, we lift the notion of blocked clauses to first-order logic.
We introduce two types of blocked clauses, 
one for first-order logic with equality 
and the other for first-order logic without equality, 
and prove their redundancy.
In addition, we give a polynomial 
algorithm for checking whether a clause is blocked.
Based on our new notions of blocking,
we implemented a novel first-order preprocessing tool.
Our experiments showed that many
first-order problems in the TPTP library
contain a large number of blocked clauses.
Moreover, we observed that their elimination can improve
the performance of modern theorem provers, especially on satisfiable problem instances.
\end{abstract}

\section{Introduction}
\label{sec:intro}

Modern theorem provers often use dedicated
\emph{preprocessing methods} to speed up the proof search~\cite{DBLP:conf/fmcad/HoderKKV12,khasidashvili16_fo_predicate_elimination}. 
As most of these provers are based on proof systems that require 
formulas to be in conjunctive normal form (CNF), 
a wide range of established preprocessing methods performs simplifications on the CNF representation
of the input formula. 
Preprocessing on the CNF level is well explored for propositional 
logic~\cite{DBLP:conf/lpar/HeuleJB10} and has been successfully 
integrated into SAT solvers such 
as \minisat~\cite{een16_minisat}, 
\glucose~\cite{DBLP:conf/ijcai/AudemardLS13}, or 
\lingeling~\cite{Biere-SAT-Competition-2016-solvers}. 
But, although generalizations of several propositional preprocessing methods have 
been utilized by first-order theorem provers, one particularly successful concept
has, to the best of our knowledge, not yet found its way to first-order logic: 
the simple yet powerful concept of \emph{blocked clauses}~\cite{DBLP:journals/dam/Kullmann99}.  
In this paper, we address this issue and lift the notion of blocked clauses to first-order logic.

Informally,
a clause $C$ is \emph{blocked}
by one of its literals 
in a propositional CNF formula $F$ if all resolvents of $C$ upon this literal 
are tautologies~\cite{DBLP:journals/dam/Kullmann99}. 
A blocked clause is redundant in the sense that 
neither its deletion from nor its addition to $F$  affects the satisfiability or unsatisfiability 
of $F$. 
Blocked clauses provide the basis for the propositional preprocessing techniques of
\emph{blocked-clause elimination}~(BCE), \emph{blocked-clause addition} (BCA), and \emph{blocked-clause decomposition} (BCD).

Blocked-clause elimination considerably boosts solver performance by simulating 
several other, more complicated preprocessing techniques~\cite{DBLP:journals/jar/JarvisaloBH12}. 
But not only SAT solvers benefit from BCE;
even greater performance improvements are achieved when generalizations
of BCE are used for solving problems beyond  
the complexity class NP such as reasoning over quantified Boolean formulas~(QBF)~\cite{DBLP:journals/jair/HeuleJLSB15} 
or dependency quantified Boolean formulas~(DQBF)~\cite{DBLP:conf/sat/WimmerGNSB15}. 
When performed in a careful manner, however, also the \emph{addition} of certain small blocked clauses has shown to be useful~\cite{Inprocessing}.

Finally, blocked-clause decomposition~\cite{BlockedClauseDecomposition} is
a technique that splits a CNF formula into two parts that can in turn be solved
via blocked-clause elimination.
Applications of blocked-clause decomposition are, for instance,   
the identification of backbone variables, 
the detection of implied equivalences, and gate extraction. 
Moreover, the winner of the SAT-Race 2015 competition, \abcdsat~\cite{DBLP:journals/corr/Chen15o}, is\linebreak[4] based on blocked-clause decomposition.

The generalization of blocked clauses to the first-order case is not straightforward and poses several challenges; in particular,
the following two are crucial: 
First, the involvement of unification in first-order resolution brings some intricacies with it 
that are absent in propositional logic. 
A careful choice of resolvents is therefore essential
for ensuring the redundancy of blocked clauses.
Second, in the presence of equality, further problems
are caused by the fact
that Herbrand's Theorem has to be adapted 
in order to account for the peculiarities of equality.
Our approach successfully resolves these issues. 

The main contributions of this paper are the following: 
\begin{inparaenum}[(1)]
\item We present {blocked clauses} 
for first-order logic 
and prove their redundancy given that equality is not present.  
\item
 We introduce equality-blocked clauses, a refined notion of blocked clauses that guarantees redundancy even in the presence of equality.
\item We give a polynomial algorithm for deciding whether a 
clause is blocked.
\item To demonstrate one potential application of blocked 
   clauses, we implement a tool that performs
blocked-clause elimination and evaluate its impact on the 
performance of modern first-order theorem provers. 
\todo{Maybe two sentences about \cite{khasidashvili16_fo_predicate_elimination} as related work?: predicate elimination
is first order generalisation of the variable and clause elimination technique \cite{DBLP:conf/sat/EenB05}. Inspiration, bla bla. 
See also the note before Lemma \ref{thm:f_assignments_can_be_extended}.}
\end{inparaenum}

This paper is structured as follows. After introducing the necessary preliminaries in Section~\ref{sec:prelim}, we shortly recapitulate the propositional 
notion of blocked clauses and lift it to first-order logic in 
Section~\ref{sec:bc_fol_without_equality}. In Section~\ref{sec:bc_fol_with_equality}, we introduce equality-blocked
clauses and prove that they are redundant even when the equality predicate  
is present. We discuss the complexity of deciding the blockedness of a 
clause in Section~\ref{sec:complexity}. Finally, in Section~\ref{sec:bce},
we present our implementation of 
blocked-clause elimination,
relate it to other first-order preprocessing techniques, 
and evaluate its impact on first-order theorem provers.  

\section{Preliminaries}
\label{sec:prelim}

We assume the reader to be familiar with the basics of 
first-order logic.
As usual, formulas 
of a first-order language $\cal{L}$
are built using predicate symbols, function symbols, and constants from some given 
alphabet 
 together with logical connectives, quantifiers, and variables.
We use the letters $a,b,c,\ldots$ for constants and $x,y,z,u,\varv,\ldots$ for variables (possibly with 
subscripts). 
The equality predicate symbol $\foeq$ is used in infix notation and we write $x \foneq y$ for $\neg (x \foeq y)$.
An expression (i.e., a term, literal, formula, etc.) is \emph{ground} if it contains no variables.


A \emph{literal} is an atom or the negation of an atom 
and a disjunction of literals is a \emph{clause}. 
For a literal $L$ and an atom $P$, we define
$\bar L = \neg P$ if $L = P$ and $\bar L = P$ if $L = \neg P$. 
In the former case, $L$ 
is of \emph{positive polarity}; in the latter case, it is of \emph{negative polarity}.
A formula is in \emph{conjunctive normal form} (\emph{CNF}) if it is a conjunction of clauses. 
W.l.o.g., clauses are assumed to be variable disjoint.
Variables of a CNF formula are implicitly universally 
quantified.
We treat CNF formulas as sets of clauses and
clauses as multisets of literals.
If not stated otherwise, we assume formulas to be in CNF.
\iftautology\else
A clause is a \emph{tautology} if it contains both $L$ and $\bar L$ for some literal $L$.
\fi
%

We use the standard notions of \emph{interpretation}, \emph{model}, \emph{validity}, \emph{satisfiability}, \emph{logical equivalence}, and \emph{satisfiability equivalence}.
The predicate symbol $\foeq$ is special as it must be \textbf{interpreted} as the identity relation over the domain under consideration.
\iftautology
By a \emph{tautology}, we mean a formula that is satisfied by every interpretation.
\fi
A \emph{propositional assignment} is a mapping from ground atoms to the truth values~$\true$~(\emph{true}) and $\false$ (\emph{false}).
Accordingly, a set of ground clauses is \emph{propositionally satisfiable} if there exists a propositional assignment that satisfies $F$ under the usual semantics for the logical connectives.
An assignment $\alpha'$ is obtained of an assignment $\alpha$ by \emph{flipping} the truth value of a literal $L$ if $\alpha'$ agrees with $\alpha$ on all atoms except for that of $L$ to which it assigns the opposite truth value.
We sometimes write 
propositional assignments as sequences of literals
where a positive (negative) polarity of a literal indicates that
its corresponding atom is assigned to true (false,~respectively).

A \emph{substitution} is a mapping from variables to terms that agrees with the identity function on all but finitely many variables.
Let $\sigma$ be a substitution.
The domain, $\dom(\sigma)$, of $\sigma$ is the set of variables for which $\sigma(x) \neq x$.
The range, $\range(\sigma)$, of $\sigma$ is the set $\{\sigma(x) \mid x \in \dom(\sigma)\}$. 
A substitution is \emph{ground} if its range consists only of ground terms.
As common, $E\sigma$ denotes the result of applying $\sigma$ to the expression~$E$.
If $E\sigma$ is ground, it is a \emph{ground instance} of $E$.
Juxtaposition of substitutions denotes their composition, i.e., $\sigma\tau$ stands for $\tau \circ \sigma$.
The substitution $\sigma$ is a \emph{unifier} of the expressions $E_1, \dots, E_n$ if $E_1\sigma = \dots =  E_n\sigma$.
For substitutions $\sigma$ and $\tau$, we say that $\sigma$ is \emph{more general} than $\tau$ if there exists
a substitution $\lambda$ such that $\sigma\lambda = \tau$.
Furthermore, $\sigma$ is a \emph{most general unifier} ($\mgu$) of $E_1, \dots, E_n$ if, for every
unifier $\tau$ of $E_1, \dots, E_n$, $\sigma$ is more general than $\tau$.
In the rest of the paper, 
we make use of two popular variants of Herbrand's Theorem
(cf.~\cite{fitting96_fo_logic}):

\begin{theorem}\label{thm:herbrands_theorem}
A 
formula $F$ that does not contain the equality predicate is satisfiable
iff 
every finite set of ground instances of clauses in $F$ is propositionally satisfiable.
\end{theorem}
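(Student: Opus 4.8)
The plan is to prove the two directions separately, with essentially all the content concentrated in the ``if'' direction, which I would establish via propositional compactness together with a Herbrand-model construction.

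For the ``only if'' direction, suppose $F$ is satisfiable and fix a first-order model $M$ of $F$. Since the variables of $F$ are implicitly universally quantified, every ground instance $C\sigma$ of a clause $C \in F$ is a logical consequence of $F$, so $M$ satisfies $C\sigma$. I would then read off a propositional assignment $\alpha_M$ by setting $\alpha_M(A) = \true$ exactly for those ground atoms $A$ that are satisfied by $M$. A routine induction on the structure of clauses shows that $\alpha_M$ propositionally satisfies every ground instance of every clause in $F$, hence in particular every finite set of such instances.

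For the ``if'' direction, let $G$ denote the (in general infinite) set of all ground instances of clauses of $F$ over the Herbrand universe of $F$, adding a fresh constant if $F$ has no constants so that the universe is nonempty. By assumption every finite subset of $G$ is propositionally satisfiable, so by the compactness theorem for propositional logic the whole set $G$ is propositionally satisfiable; fix a witnessing assignment $\alpha$. From $\alpha$ I would build the Herbrand interpretation $H$ whose domain is the Herbrand universe, which interprets every function symbol syntactically (i.e., $f$ maps $t_1, \dots, t_n$ to the term $f(t_1, \dots, t_n)$), and which interprets each predicate symbol $P$ so that $P(t_1, \dots, t_n)$ holds in $H$ iff $\alpha$ assigns $\true$ to the corresponding ground atom. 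Because the domain of $H$ consists exactly of the ground terms, a universally quantified clause holds in $H$ iff all of its ground instances hold in $H$; since $\alpha$ satisfies all clauses in $G$, the interpretation $H$ satisfies every clause of $F$, and thus $F$ is satisfiable.

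The crux, and the step I would be most careful about, is this backward direction, specifically the legitimacy of the Herbrand construction. It is exactly here that the hypothesis that $F$ does not contain the equality predicate is indispensable: in a Herbrand interpretation two syntactically distinct ground terms denote distinct domain elements, whereas an arbitrary satisfying assignment $\alpha$ may well make an atom $t \foeq s$ true for distinct terms $t$ and $s$. Hence, if $\foeq$ occurred in $F$, the interpretation $H$ would in general fail to interpret $\foeq$ as the identity relation, violating the semantic constraint on equality, and the argument would break down. The other point deserving a line of justification is propositional compactness itself, which I would either invoke as standard or derive via K\"onig's lemma on the tree of partial assignments; the remaining verifications, namely that the first-order and propositional notions of clause satisfaction coincide and that the induction over the connectives goes through, are routine.
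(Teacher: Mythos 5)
Your proposal is correct, and in fact the paper contains no proof of this statement at all: it is quoted as a classical result with a citation to Fitting's textbook, and your argument—one direction by reading off a propositional assignment from a first-order model, the other by propositional compactness followed by the Herbrand-interpretation construction—is precisely the standard proof that the citation points to. Your remark on where the hypothesis is indispensable is also exactly right: a satisfying assignment may set $t \foeq s$ true for syntactically distinct ground terms, so the Herbrand interpretation would fail to interpret $\foeq$ as identity, which is why the paper needs the separate variant (Theorem~\ref{thm:herbrand_with_equality}) with the equality axioms $\eqaxioms$ added.
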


\noindent
%
%
Furthermore, a formula $F$ that contains the equality predicate is satisfiable iff $F \cup \eqaxioms$ is satisfiable without the restriction that $\foeq$ must be interpreted as the identity relation,
where $\eqaxioms$ denotes the following set of \emph{equality axioms} for the language $\cal{L}$ under consideration%
~(cf.~\cite{fitting96_fo_logic}):
\begin{enumerate}[(E1)]
	\item $x \foeq x$;
	\item for each $n$-ary function symbol $f$ in $\cal{L}$, 
	$x_1 \foneq y_1 \lor \dots \lor x_n \foneq y_n \lor f(x_1,\dots,x_n) \foeq f(y_1,\dots,y_n)$; 
	\item for each $n$-ary predicate symbol $P$ in $\cal{L}$, 
	$x_1 \foneq y_1$ $\lor \dots \lor x_n \foneq y_n \lor \neg P(x_1,\dots,x_n) \lor P(y_1,\dots,y_n)$.
\end{enumerate}

\noindent
Hence,  the following variant of Herbrand's Theorem for formulas with equality follows:

\begin{theorem}\label{thm:herbrand_with_equality}
A 
formula $F$ that contains the equality predicate is satisfiable
iff 
every finite set of ground instances of clauses in \mbox{$F \cup \eqaxioms$} is propositionally satisfiable.
\end{theorem}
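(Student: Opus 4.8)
The plan is to obtain the statement by composing two facts already available in the excerpt: the reduction of equality to the axiom set $\eqaxioms$, and the equality-free version of Herbrand's Theorem (Theorem~\ref{thm:herbrands_theorem}). No new combinatorial argument is needed; the work is entirely in lining up the equivalences so that the notion of propositional satisfiability is applied consistently.

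First I would invoke the equivalence stated immediately before the theorem: a formula $F$ containing $\foeq$ is satisfiable, with $\foeq$ forced to denote the identity relation, if and only if $F \cup \eqaxioms$ is satisfiable once the identity requirement on $\foeq$ is dropped. This step moves the whole problem into a setting in which $\foeq$ carries no special semantics. To then apply Theorem~\ref{thm:herbrands_theorem}, which is stated only for formulas that do not contain the equality predicate, I would treat $\foeq$ syntactically as an ordinary binary predicate: replacing every occurrence of $\foeq$ in $F \cup \eqaxioms$ by a fresh binary predicate symbol $Q$ yields a formula $G$ over a language without equality, and by construction the relaxed satisfiability of $F \cup \eqaxioms$ coincides exactly with the ordinary satisfiability of $G$. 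Theorem~\ref{thm:herbrands_theorem} then gives that $G$ is satisfiable iff every finite set of ground instances of its clauses is propositionally satisfiable. Since the renaming $\foeq \mapsto Q$ is a bijection between clauses (and hence ground instances) of $F \cup \eqaxioms$ and of $G$, and since propositional satisfiability treats $\foeq$-atoms and $Q$-atoms uniformly as opaque propositional variables, this is the same as saying that every finite set of ground instances of clauses in $F \cup \eqaxioms$ is propositionally satisfiable. Chaining the two equivalences yields the claim.

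The one point requiring care, and really the only substantive obstacle, is ensuring that \emph{propositional satisfiability} is understood uniformly, so that at the propositional level no residual identity constraint is silently imposed on ground $\foeq$-atoms. This is exactly why the axioms $\eqaxioms$ are needed: (E1)--(E3) encode reflexivity together with congruence for each function and predicate symbol as ordinary clauses, so that any propositional model of a set of their ground instances automatically respects the relevant congruence structure without any built-in interpretation of $\foeq$. Once this observation is in place, the argument is a direct composition of the two cited results and requires no further calculation.
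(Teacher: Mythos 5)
Your proposal is correct and takes essentially the same route as the paper, which obtains Theorem~\ref{thm:herbrand_with_equality} precisely by composing the stated equivalence between satisfiability of $F$ (with $\foeq$ as identity) and satisfiability of $F \cup \eqaxioms$ (with that requirement dropped) with the equality-free Herbrand Theorem~\ref{thm:herbrands_theorem}. Your renaming of $\foeq$ to a fresh predicate $Q$ merely makes explicit the step the paper leaves implicit when it says the theorem ``follows,'' and your closing remark about propositional assignments imposing no identity constraint on ground $\foeq$-atoms is exactly the observation that justifies that composition.
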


\noindent
Next, we formally introduce the redundancy of clauses.
Intuitively, a clause $C$ is redundant \wrt a formula $F$ if neither its addition to $F$ 
nor its removal from $F$ changes the satisfiability or unsatisfiability of $F$~\cite{heule16_solutionvalidationqbf}:%

\begin{definition}\label{def:clause_redundancy}
A clause $C$ is \emph{redundant} \wrt a 
formula $F$ if $F \setminus \{C\}$ and $F \cup \{C\}$ are satisfiability equivalent. 
\end{definition}

\noindent
Note that this notion of redundancy does not require \emph{logical} equivalence of 
$F \setminus \{C\}$ and $F \cup \{C\}$ and that it 
is different from the Bachmair-Ganzinger notion of redundancy that 
is usually employed within the context of ordered resolution~\cite{DBLP:books/el/RV01/BachmairG01}.
It provides the basis for both clause elimination and clause addition procedures.
Note also that the redundancy of a clause $C$ \wrt a formula $F$ can be shown by proving that the
satisfiability of $F \setminus \{C\}$ implies the satisfiability of $F \cup \{C\}$.

Finally, 
given two clauses 
$C = L_1 \lor \dots \lor L_k \lor C'$ and $D = N_1 \lor \dots \lor N_l \lor D'$ such that the
literals $L_1, \dots, L_k, \bar N_1, \dots, \bar N_l$ are unifiable by an $\mgu$~$\sigma$, 
the clause $C'\sigma \lor D'\sigma$ is said to be a \emph{resolvent}
of $C$ and $D$.
If $k = l = 1$, it is a \emph{binary resolvent} of $C$ and $D$ upon $L_1$.

\section{Blocked Clauses 
}
\label{sec:bc_fol_without_equality}

In this section, we first recapitulate the notion of blocked clauses used 
in propositional logic. 
We then illustrate complications that arise when
lifting blocked clauses to first-order logic.
As main result of the section, we introduce blocked clauses for first-order logic 
and prove that they are redundant if the 
equality predicate is not present.
Throughout this section, we therefore consider only clauses and formulas without the equality predicate.

In propositional logic, a clause $C$ is 
\emph{blocked} by a literal $L \in C$ in a CNF formula $F$ if 
all binary resolvents of $C$ upon $L$ with clauses 
from $F \setminus \{C\}$ are tautologies.
A clause $C$ is blocked in a formula~$F$ if $C$ is blocked in~$F$ by one (or more) of its
literals.

\begin{example}\label{ex:prop_blocked_clause}
The clause $C = \neg P \lor Q$ is blocked by $\neg P$ in
$F = \{$\mbox{$P \lor \neg Q$}, \mbox{$\neg Q \lor R$}$\}$: 
The only resolvent of $C$ upon $\neg P$ is the tautology $Q \lor \neg Q$, 
obtained by resolving with $P \lor \neg Q$.
\end{example}

\noindent
Under the restriction---common in propositional logic---that clauses must not contain multiple occurrences 
of the same literal, 
it can be shown that 
blocked clauses are redundant:
Let $C$ be blocked by $L \in C$ in a formula $F$. 
Then, every assignment that satisfies $F \setminus \{C\}$ but falsifies~$C$ can be turned into 
a satisfying assignment of $C$ by simply flipping the truth value of~$L$, i.e., by inverting the truth value of its 
atom. 
This flipping does not  falsify any of the clauses in $F \setminus \{C\}$ 
that contains~$\bar L$, because of 
the fact that every binary resolvent of $C$ upon $L$ is a tautology:
A clause that contains $\bar L$ either is itself a tautology or 
it contains a literal $R \neq \bar L$ such that $\bar R \in C$.
In the latter case, since $C$ and thus $\bar R$ was assumed to be false before the flipping of the truth value of $L$, 
$R$ also stays true afterwards.

\begin{example}
Consider again $C$ and $F$ from Example \ref{ex:prop_blocked_clause}. 
The assignment $P\neg QR$ satisfies $F \setminus \{C\}$ but falsifies $C$. 
By flipping the truth value of $\neg P$, we obtain the assignment $\neg P \neg Q R$ 
that satisfies $F \cup \{C\}$. 
The only clause that could have possibly been falsified, namely $P \lor \neg Q$, stays true since it contains $\neg Q$ 
which was true before the flipping. 
\end{example}

\noindent
As can be seen in the next example, 
redundancy is not guaranteed
when clauses are allowed to contain multiple occurrences of the same literal.
Although the example might seem pathological at first, it will help to illustrate an inherent complication
arising in first-order logic:

\begin{example}\label{ex:multiple_occurrences}
Let $C = P \lor P$ and $F = \{\neg P \lor \neg P\}$. 
Clearly, $F \setminus \{C\} = F$ is satisfiable whereas $F \cup \{C\}$ is not. 
There is one binary resolvent of $C$ upon $P$, 
namely the tautology $P \lor \neg P$, hence $C$ is blocked by $P$ in $F$.
However, turning a satisfying assignment of $F$ \textup{(}i.e., one that falsifies $P$\textup{)} into one
of $C$ by flipping the truth value of $P$ falsifies $\neg P \lor \neg P$.
\end{example}

\noindent
In first-order logic, the requirement that all binary resolvents of $C$ upon $L$ are
\iftautology
tautologies
\else
valid%
\footnote{As common in first-order logic, we use the notion of \emph{validity} instead of \emph{tautologyhood}. 
In the absence of equality, a clause is valid if and only if it contains two complementary literals~$L, \bar L$.}
\fi%
fails to guarantee redundancy, 
even when clauses are not allowed to contain multiple occurrences of the same literal.
The reason is that similar
issues as in Example~\ref{ex:multiple_occurrences}
might occur on the ground level after certain literals are instantiated through unification:


\newcommand{\xone}{x}
\newcommand{\xtwo}{y}
\newcommand{\yone}{u}
\newcommand{\ytwo}{\varv}

\begin{example}\label{ex:bc_fo_unsound}
Consider $C = P(\xone,\xtwo) \lor P(\xtwo,\xone)$ and $F = \{$\mbox{$\neg P(\yone,\ytwo) \lor \neg P(\ytwo,\yone)$}$\}$. 
Two binary resolvents can be derived from $C$ upon $P(\xone,\xtwo)$ and both are 
\iftautology
tautologies:
\else
valid:
\fi
The resolvent \mbox{$P(\ytwo,\yone) \lor \neg P(\ytwo,\yone)$}, obtained by using the $\mgu$ $\{\xone \mapsto \yone, \xtwo \mapsto \ytwo\}$ of $P(\xone,\xtwo)$ and $P(\yone,\ytwo)$, 
and the resolvent \mbox{$P(\yone,\ytwo) \lor \neg P(\yone,\ytwo)$}, 
obtained by using the $\mgu$ $\{\xone \mapsto \ytwo, \xtwo \mapsto \yone\}$ of $P(\xone,\xtwo)$ and $P(\ytwo,\yone)$.
However, the formula \mbox{$F \setminus \{C\}$} $ = F$ is clearly satisfiable 
whereas \mbox{$F \cup \{C\}$} is not.
To see this, observe that there exists no satisfying assignment for the two
ground instances $P(c,c) \lor P(c,c)$ and $\neg P(c,c) \lor \neg P(c,c)$ of $C$ 
and $\neg P(\yone,\ytwo) \lor \neg P(\ytwo,\yone)$, respectively.
Since $P(\yone,\ytwo)$ and $P(\ytwo,\yone)$ both unify with $P(\xone,\xtwo)$ on the ground level, 
we face the same problem as in Example \ref{ex:multiple_occurrences}.
\end{example}

\noindent
Such examples are often used for illustrating that binary resolution
alone does not guarantee completeness of the resolution principle 
(see, e.g., \cite{fitting96_fo_logic} or 
\cite{leitsch97_resolution_calculus}).
Analogously, we have to test the
\iftautology
tautologyhood
\else
validity
\fi
of more than just its binary resolvents 
in order to guarantee the redundancy of a clause.
But there is no need to test all possible resolvents.
As we will see, it is enough to consider the following ones: 

\begin{definition}
Let $C = L \lor C'$ and $D = N_1 \lor \dots \lor N_l \lor D'$ with $l>0$ be clauses such that the 
literals $L, \bar N_1, \dots, \bar N_l$ are unifiable by an $\mgu$~$\sigma$.
Then, $C'\sigma \lor D'\sigma$ is called \emph{$L$-resolvent} of $C$ and $D$.
\end{definition}

\begin{definition}\label{def:fo_bc_noneq}
A clause $C$ is \emph{blocked} by a literal $L \in C$ in a formula $F$ if 
all $L$-resolvents of $C$ 
with clauses in $F \setminus \{C\}$ 
are 
\iftautology
tautologies.
\else
valid.
\fi
\end{definition}

\noindent
For instance, in Example~\ref{ex:bc_fo_unsound}, the clause $C$ is \emph{not} blocked by $L = P(\xone,\xtwo)$ in~$F$.
In addition to the two 
\iftautology
tautological 
\else
valid
\fi
binary resolvents---which are both $L$-resolvents---already considered in the example,
there is another 
\iftautology
(non-tautological) 
\fi
$L$-resolvent of $C$ and $\neg P(\yone,\ytwo) \lor \neg P(\ytwo,\yone)$,
namely $P(\xone,\xone)$%
\iftautology
\else
\ (which is not valid)%
\fi
, obtained by unifying $P(\xone,\xtwo)$, $P(\yone,\ytwo)$, and $P(\ytwo,\yone)$ via the $\mgu$ $\{\xtwo \mapsto \xone, \yone \mapsto \xone, \ytwo \mapsto \xone\}$.
Example \ref{ex:fo_bc_noneq} shows a clause that is blocked according to Definition~\ref{def:fo_bc_noneq}:

\begin{example}\label{ex:fo_bc_noneq}
Let $C = P(x,y) \lor \neg P(y,x) \lor Q(b)$ and $F = \{\neg P(a,b) \lor P(b,a), \neg Q(b)\}$. 
Then, $L = P(x,y)$ blocks $C$ in $F$ since there is only a single $L$-resolvent of $C$ upon $P(x,y)$, 
namely $\neg P(b,a) \lor Q(b) \lor P(b,a)$, 
obtained by using the $\mgu{}$ $\{x \mapsto a, y \mapsto b\}$ of the literals $P(x,y)$ and $P(a,b)$,
and this resolvent is
\iftautology
tautological.
\else
valid.
\fi
\end{example}

\noindent
Similar to the propositional case, where a satisfying assignment of $F \setminus \{C\}$ (with $C$ being blocked in $F$)
can be turned into one of $F \cup \{C\}$ by flipping
the truth value of the blocking literal, we can satisfy ground
instances of blocked clauses in first-order logic.
For instance, in Example~\ref{ex:fo_bc_noneq}, 
the assignment $\alpha = \neg P(a,b) P(b,a) \neg Q(b)$ satisfies \mbox{$F \setminus \{C\}$}
(which is already ground) but falsifies the ground instance $P(a,b) \lor \neg P(b,a) \lor Q(b)$ of~$C$.
By flipping the truth value of $P(a,b)$ we obtain 
$\alpha' = P(a,b) P(b,a) \neg Q(b)$---a satisfying assignment of this ground instance that still satisfies~$F$. 

\begin{lemma}\label{thm:fo_flipping_works}
Let $C$ be blocked by $L$ in $F$, and $\alpha$ a propositional assignment that falsifies a ground instance $C\lambda$ of~$C$. 
Then, the assignment $\alpha'$, obtained from $\alpha$ by flipping the truth value of $L\lambda$, 
satisfies all the ground instances of clauses in $F \setminus \{C\}$ that are satisfied by $\alpha$.
\end{lemma}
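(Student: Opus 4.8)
The plan is to show that turning $\alpha$ into $\alpha'$ by flipping $L\lambda$ from false to true cannot destroy the satisfaction of any ground instance $D\tau$ of any clause $D \in F \setminus \{C\}$ that $\alpha$ satisfies. First I would record the two elementary facts about the flip: the only atom on which $\alpha$ and $\alpha'$ differ is that of $L\lambda$, and, since $\alpha$ falsifies $C\lambda = L\lambda \lor C'\lambda$, the literal $L\lambda$ is false under $\alpha$ and true under $\alpha'$, while $\overline{L\lambda}$ is true under $\alpha$ and false under $\alpha'$. Hence the only literals that can drop from true to false are the occurrences of $\overline{L\lambda}$. So I fix $D \in F \setminus \{C\}$ and a ground instance $D\tau$ with $\alpha \models D\tau$, let $N_1, \dots, N_k$ be exactly those literals of $D$ (with multiplicity) for which $N_i\tau = \overline{L\lambda}$, and set $D' = D \setminus \{N_1,\dots,N_k\}$. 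If $k = 0$, then $D\tau$ contains no occurrence of $\overline{L\lambda}$, no true literal turns false, and $\alpha' \models D\tau$ at once.

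The interesting case is $k > 0$, where blockedness enters. From $N_i\tau = \overline{L\lambda}$ we get $\bar N_i\tau = L\lambda$, and since $C$ and $D$ are variable-disjoint, the substitution $\gamma$ acting as $\lambda$ on the variables of $C$ and as $\tau$ on those of $D$ is a unifier of $L, \bar N_1, \dots, \bar N_k$. Let $\sigma$ be an $\mgu$ of these literals and factor $\gamma = \sigma\rho$. Then $C'\sigma \lor D'\sigma$ is an $L$-resolvent of $C$ and $D$, hence valid because $C$ is blocked by $L$ in $F$. As validity (absent equality) merely means the clause contains a complementary pair, and a complementary pair remains complementary under any substitution, the instance $(C'\sigma \lor D'\sigma)\rho = C'\lambda \lor D'\tau$ is also valid, so it contains two complementary literals $M$ and $\bar M$.

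The heart of the argument is a short case analysis on where this pair sits. It cannot lie entirely in $C'\lambda$, for that would make $C'\lambda$, and thus $C\lambda$, a tautology, contradicting that $\alpha$ falsifies $C\lambda$. If both literals lie in $D'\tau$, then $D'\tau$ is a tautology and is therefore satisfied by $\alpha'$, giving $\alpha' \models D\tau$. In the remaining case $M \in C'\lambda$ and $\bar M \in D'\tau$; since $\alpha$ falsifies $C'\lambda$, the literal $M$ is false under $\alpha$, so $\bar M$ is true under $\alpha$. Now $\bar M$ cannot equal $\overline{L\lambda}$ (no literal of $D'$ maps to $\overline{L\lambda}$ under $\tau$, by the choice of the $N_i$) and cannot equal $L\lambda$ (otherwise $C'\lambda$ would contain $\overline{L\lambda}$ and $C\lambda$ would again be a tautology). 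Thus the atom of $\bar M$ differs from that of $L\lambda$, so the flip leaves $\bar M$ true under $\alpha'$; as $\bar M$ is a literal of $D\tau$, we conclude $\alpha' \models D\tau$.

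I expect the main obstacle to be precisely the phenomenon of Example~\ref{ex:bc_fo_unsound}: a single ground literal $\overline{L\lambda}$ may arise from several distinct literals of $D$ under $\tau$, so resolving $L$ against one literal of $D$ at a time is not enough. The remedy is to gather all such literals $N_1,\dots,N_k$ and resolve on them simultaneously, which is exactly what the $L$-resolvent of Definition~\ref{def:fo_bc_noneq} permits. The accompanying care is then threefold: (i) checking that the combined substitution $\gamma$ is genuinely a unifier, so the $L$-resolvent exists; (ii) transporting validity from $C'\sigma \lor D'\sigma$ to its instance $C'\lambda \lor D'\tau$; and (iii) confirming, in the mixed case, that the surviving complementary literal really lies in $D'\tau$ and carries an atom distinct from that of $L\lambda$, so that the flip cannot disturb it.
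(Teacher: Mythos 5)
Your proof is correct and follows essentially the same route as the paper's: collect \emph{all} literals $N_1,\dots,N_k$ of $D$ with $N_i\tau = \bar L\lambda$, observe that $\lambda \cup \tau$ unifies $L,\bar N_1,\dots,\bar N_k$, factor it through an $\mgu$ to instantiate the valid $L$-resolvent into $C'\lambda \lor D'\tau$, and conclude that a true literal of $D'\tau$ distinct from $\bar L\lambda$ (and from $L\lambda$) survives the flip. Your closing three-way case analysis on where the complementary pair lies is just a more explicit, syntactic rendering of the paper's semantic step ``$\alpha$ satisfies the valid instance but falsifies $C'\lambda$, hence satisfies some literal of $D'\tau$,'' and it has the small merit of spelling out why that literal cannot be $L\lambda$ itself.
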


\begin{proof}
Let $D\tau$ be a ground instance of a clause $D \in F \setminus \{C\}$ and suppose $\alpha$ satisfies $D\tau$. 
If $D\tau$ does not contain $\bar L\lambda$ it is trivially satisfied by $\alpha'$. 
Assume therefore that $\bar L\lambda \in D\tau$ and 
let $N_1, \dots, N_l$ be all the literals in $D$ such that $N_i\tau = \bar L\lambda$ for $1 \leq i \leq l$.
Then, the substitution $\lambda\tau = \lambda \cup \tau$ 
(note that $C$ and $D$ are variable disjoint by assumption)
is a unifier of $L, \bar N_1, \dots, \bar N_l$.
Since $C$ is blocked by $L$ in $F$, the $L$-resolvent $(C \setminus \{L\})\sigma \lor (D \setminus \{N_1, \dots, N_l\})\sigma$,
with $\sigma$ being an $\mgu$ of $L, \bar N_1, \dots, \bar N_l$, is
\iftautology
a tautology.
\else
valid.
\fi
As $\sigma$ is most general, it follows that $\sigma\gamma = \lambda\tau$ for some substitution~$\gamma$. 
Hence,
\[
\begin{array}{llll}
 	&(C \setminus \{L\})\sigma\gamma &\lor\ (D \setminus \{N_1, \dots, N_l\})\sigma\gamma\\
=	&(C \setminus \{L\})\lambda\tau 	&\lor\ (D \setminus \{N_1, \dots, N_l\})\lambda\tau\\
=	&(C \setminus \{L\})\lambda 		&\lor\ (D \setminus \{N_1, \dots, N_l\})\tau
\end{array}
\]
is 
\iftautology
a tautology.
\else
valid.
\fi
Thus, since $\alpha$ falsifies $C\lambda$, it must satisfy a literal $L'\tau \in (D \setminus \{N_1, \dots, N_l\})\tau$. 
But, as all the literals in $(D \setminus \{N_1, \dots, N_l\})\tau$ are different from $\bar L\lambda$,
flipping the truth value of $L\lambda$ does not affect the truth value of $L'\tau$.
It follows that $\alpha'$ satisfies $L'\tau$ and thus it satisfies~$D\tau$.
\end{proof}

\noindent
A falsified ground instance $C\lambda$ of $C$ can therefore be satisfied 
without falsifying any ground instances of clauses in \mbox{$F\setminus\{C\}$} by simply flipping the truth value of $L\lambda$.
Still, it could happen that this flipping falsifies other ground instances of~$C$ itself,
namely those in which the only satisfied literals are complements of $L\lambda$. 
As it turns out, this is not a serious problem. 
Consider the following example:

\begin{example}
Given $C$ and $F$ from Example \ref{ex:fo_bc_noneq}, let $P(a,b) \lor \neg P(b,a) \lor Q(b)$ and $P(b,a) \lor \neg P(a,b) \lor Q(b)$ be 
the two 
\iftautology
non-tautological 
\fi
ground instances\footnote{With respect to the (here) finite Herbrand universe $\{a,b\}$.} 
of~$C$ \iftautology\else that are not valid\fi.
As shown above, the satisfying assignment $\alpha = \neg P(a,b)P(b,a) \neg Q(b)$ of $F$
can be turned into the satisfying assignment $\alpha' = P(a,b)P(b,a) \neg Q(b)$ of $P(a,b) \lor \neg P(b,a) \lor Q(b)$ by flipping the truth value of $P(a,b)$.
Now, $\alpha'$ falsifies the other ground instance $P(b,a) \lor \neg P(a,b) \lor Q(b)$ of $C$.

But, by flipping the truth value of yet another instance of the blocking literal---this time that of $P(b,a)$---we can also satisfy $P(b,a) \lor \neg P(a,b) \lor Q(b)$.
We don't need to worry that this flipping falsifies $P(a,b) \lor \neg P(b,a) \lor Q(b)$ again---the
instance $P(a,b)$ of the blocking literal cannot be falsified by making a literal of the form $P(\dots)$ true.
The resulting assignment $\alpha'' = P(a,b) P(b,a) \neg Q(b)$
is then a satisfying assignment of all ground instances of clauses in $F \cup \{C\}$.
\end{example} 

\noindent
The proof of the following lemma is based on the idea of repeatedly making instances of the blocking literal true.
We remark that---thanks to this lemma---the definition of a blocked clause 
can safely ignore resolvents of the clause $C$ with itself. 
It is not a priori obvious that these resolvents can be ignored when lifting the propositional notion
since ``on the ground level'' two different instances of $C$ may become premises of a resolution step.
(For this exact reason, Khasidashvili and Korovin \cite{khasidashvili16_fo_predicate_elimination}
restrict their attention to non-self-referential predicates with their predicate elimination technique, 
a lifting of variable elimination \cite{DBLP:conf/sat/EenB05}.)

\begin{lemma}\label{thm:f_assignments_can_be_extended}
Let $C$ be blocked in $F$ 
and let $F'$ and $F_C$ be finite sets of ground instances of clauses in $F \setminus \{C\}$ and~$\{C\}$, respectively. 
Then, every assignment that propositionally satisfies $F'$ can be turned into one that satisfies $F' \cup F_C$.
\end{lemma}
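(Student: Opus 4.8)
The plan is to turn the given assignment into a model of $F' \cup F_C$ by repeatedly making instances of the blocking literal true, repairing one falsified ground instance of $C$ at a time and invoking Lemma~\ref{thm:fo_flipping_works} to guarantee that nothing in $F'$ is lost along the way. Let $L$ be a literal by which $C$ is blocked in $F$, and set $\alpha_0 = \alpha$. Given an assignment $\alpha_i$ (which I will maintain to satisfy $F'$), if $\alpha_i$ already satisfies $F_C$ there is nothing left to do; otherwise I pick a ground instance $C\mu \in F_C$ that $\alpha_i$ falsifies. Since $C\mu$ is falsified, every one of its literals, in particular $L\mu$, is false under $\alpha_i$, so I obtain $\alpha_{i+1}$ by flipping the truth value of $L\mu$, which makes $C\mu$ satisfied.

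First I would verify that each such step preserves $F'$. As $F'$ consists of ground instances of clauses in $F \setminus \{C\}$ and $\alpha_i$ satisfies $F'$, Lemma~\ref{thm:fo_flipping_works} (applied with $\lambda = \mu$) yields that $\alpha_{i+1}$ still satisfies every ground instance of a clause in $F \setminus \{C\}$ that $\alpha_i$ satisfied; in particular $\alpha_{i+1}$ satisfies $F'$. Thus the invariant ``$F'$ is satisfied'' is maintained throughout the construction, and only the ground instances in $F_C$ require further attention.

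The crux, and the step I expect to be the main obstacle, is termination: repairing $C\mu$ by flipping $L\mu$ may falsify other ground instances of $C$ whose only satisfied literal was $\overline{L\mu}$, so I must rule out an infinite repair loop. The key is a monotonicity property of the blocking literal. Each flip changes only the ground atom underlying $L\mu$, turning $L\mu$ from false to true. Since every instance $L\nu$ of $L$ shares the predicate symbol and polarity of $L$, any such instance built on this same atom is literally equal to $L\mu$ and is therefore also turned true; no instance of $L$ that was true is ever turned false. Hence the set of instances $L\mu$ with $C\mu \in F_C$ that are satisfied grows monotonically, and strictly so at each flip, because the instance just repaired was false beforehand. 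As $F_C$ is finite, only finitely many such instances exist, so the process must halt after finitely many flips.

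When it halts, no ground instance of $C$ in $F_C$ is falsified by the current assignment, and, by the invariant established above, $F'$ is still satisfied. The final assignment therefore satisfies $F' \cup F_C$, which is exactly what the lemma asserts.
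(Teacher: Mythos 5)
Your proof is correct and follows essentially the same route as the paper's: repeatedly repair falsified ground instances of $C$ by flipping the corresponding instance of the blocking literal, invoke Lemma~\ref{thm:fo_flipping_works} to preserve $F'$, and argue termination from the fact that a true instance of the blocking literal can never be made false by later flips (all instances share its polarity), so the finitely many instances in $F_C$ are exhausted. Your write-up is in fact somewhat more explicit than the paper's about the invariant and the strictly increasing termination measure.
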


\begin{proof}
\newcommand{\blitterms}{t_1, \dots, t_n}
\newcommand{\blit}{L}
\newcommand{\nblit}{\bar L}
Let $C$ be blocked by $\blit$ in $F$ and let $\alpha$ be a satisfying assignment of $F'$.
Assume furthermore that $\alpha$ does not satisfy $F_C$, i.e.,
there exist ground instances of $C$ that are falsified by $\alpha$.
By Lemma \ref{thm:f_assignments_can_be_extended}, for every falsified ground instance $C\lambda$ of $C$,
we can turn $\alpha$ into a satisfying assignment of $C\lambda$ by flipping the truth value of $\blit\lambda$.
Moreover, this flipping does not falsify any clauses in $F'$.
The only clauses that could possibly be falsified are other 
ground instances of $C$ 
 that contain the literal $\nblit\lambda$. 

But, once an instance $\blit\tau$ of the blocking literal $\blit$ is true
in a ground instance $C\tau$ of $C$, this ground instance cannot (later) be falsified by making other instances of
$\blit$ true (since it has, of course,  the same polarity as $\blit$). 
As there are only finitely many clauses in $F_C$, we can therefore
turn $\alpha$ into a satisfying assignment of $F' \cup F_C$ by repeatedly
making ground instances of $C$ true by flipping the truth
values of their instances of the blocking literal $\blit$.
\end{proof}

\begin{theorem}\label{thm:bc_fo_is_a_redundancy_property}
If a clause is blocked in a formula $F$, it is redundant \wrt $F$.
\end{theorem}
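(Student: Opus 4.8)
The plan is to reduce the claim, via Herbrand's Theorem, to the purely propositional statement already established in Lemma~\ref{thm:f_assignments_can_be_extended}. By Definition~\ref{def:clause_redundancy} and the remark following it, it suffices to show that satisfiability of $F \setminus \{C\}$ implies satisfiability of $F \cup \{C\}$; the converse direction is immediate, since every model of a formula is a model of each of its subsets. So I would assume $F \setminus \{C\}$ to be satisfiable and aim to conclude that $F \cup \{C\}$ is satisfiable.

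Because we work without equality throughout this section, I would invoke Theorem~\ref{thm:herbrands_theorem} in both directions. First, from the satisfiability of $F \setminus \{C\}$ I obtain that every finite set of ground instances of its clauses is propositionally satisfiable. Second, to establish satisfiability of $F \cup \{C\}$ it is enough to show that every finite set $G$ of ground instances of clauses in $F \cup \{C\}$ is propositionally satisfiable.

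Given such a $G$, I would partition it as $G = F' \cup F_C$, where $F'$ collects the ground instances of clauses in $F \setminus \{C\}$ and $F_C$ collects the ground instances of $C$; this partition exists because the clauses of $F \cup \{C\}$ are precisely those of $F \setminus \{C\}$ together with $C$. By the first application of Herbrand's Theorem, the finite set $F'$ is propositionally satisfiable, so I fix a propositional assignment $\alpha$ satisfying $F'$. At this point Lemma~\ref{thm:f_assignments_can_be_extended} applies directly: since $C$ is blocked in $F$, the assignment $\alpha$ can be turned into one that satisfies $F' \cup F_C = G$. Hence $G$ is propositionally satisfiable, and as $G$ was arbitrary, Theorem~\ref{thm:herbrands_theorem} yields that $F \cup \{C\}$ is satisfiable.

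The theorem is essentially a corollary of Lemma~\ref{thm:f_assignments_can_be_extended}, so I do not expect a genuine obstacle; the only points that require care are the bookkeeping involved in splitting $G$ into its $F \setminus \{C\}$-part and its $C$-part, and the observation that the entire argument stays within the equality-free fragment, which is exactly what licenses the use of Theorem~\ref{thm:herbrands_theorem} rather than its equality variant.
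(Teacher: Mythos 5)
Your proposal is correct and follows essentially the same route as the paper's own proof: assume $F \setminus \{C\}$ satisfiable, reduce via Theorem~\ref{thm:herbrands_theorem} to propositional satisfiability of finite sets of ground instances, split such a set into its $F\setminus\{C\}$-part and its $C$-part, and invoke Lemma~\ref{thm:f_assignments_can_be_extended} to extend a satisfying assignment of the former to the whole set. The only difference is presentational: you spell out the bookkeeping (the partition of $G$ and the two directions of Herbrand's Theorem) slightly more explicitly than the paper does.
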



\begin{proof}
Let $C$ be blocked by $L$ in $F$ and suppose \mbox{$F \setminus \{C\}$} is satisfiable. 
We show that $F \cup \{C\}$ is satisfiable.
By Herbrand's theorem (Theorem~\ref{thm:herbrands_theorem}), it suffices to show that every finite set of ground instances of clauses in $F \cup \{C\}$ is propositionally satisfiable.
Let therefore
$F'$ and $F_C$ be finite sets of ground instances of clauses in $F \setminus \{C\}$ and~$\{C\}$, respectively.
Clearly, $F'$ must be propositionally satisfiable for otherwise $F \setminus \{C\}$ were not satisfiable.
By Lemma~\ref{thm:f_assignments_can_be_extended}, 
every satisfying propositional assignment of $F'$ can be turned into one of $F' \cup F_C$.
It follows that $F \cup \{C\}$ is satisfiable.
\end{proof}

\section{Equality-Blocked Clauses 
}
\label{sec:bc_fol_with_equality}

In the following, we first illustrate why the blocking notion from the previous section 
fails to guarantee redundancy in the presence of equality.
We then introduce 
a refined
notion of blocking, \emph{equality-blocking}, and prove that 
equality-blocked clauses are redundant even if 
the equality predicate is present.

\begin{example}\label{ex:blocking_fails_with_equality}
Let $C = P(a)$ and $F = \{a \foeq b, \neg P(b)\}$. 
Since $P(a)$ and $P(b)$ are not unifiable, there are no resolvents of~$C$, 
hence $P(a)$ trivially blocks~$C$ in $F$.
But, $F$ is clearly satisfiable whereas $F \cup \{C\}$ is not.
\end{example}

\noindent
In Example \ref{ex:blocking_fails_with_equality}, every model of $F$ must
assign the same truth value to $P(a)$ and $P(b)$.
Hence, when trying to turn a model of $F$ into one of $F \cup \{C\}$ by flipping the truth value
of $P(a)$, we implicitly flip the truth value of $P(b)$ although $P(a)$ and $P(b)$ are not unifiable.

Thus, in the presence of equality, 
it is not enough to consider only the clauses that are resolvable with~$C$.
We need to take all clauses that contain a literal of the form $\bar L(\dots)$ into account.
In order to do so, we make use of \emph{flattening} as introduced by Khasidashvili and Korovin~\cite{khasidashvili16_fo_predicate_elimination}:

\begin{definition}\label{def:flattening}
Let $C = L(t_1, \dots, t_n) \lor C'$. \emph{Flattening} the literal $L(t_1, \dots, t_n)$ in $C$ yields the clause
$C^- = \bigvee_{1 \leq i \leq n} x_i \foneq t_i \lor L(x_1, \dots, x_n) \lor C'$,
with $x_i, \dots, x_n$ being fresh variables not occurring in $C$.
\end{definition}

\begin{example}
Flattening the literal $P(f(x),c,c)$ in clause $P(f(x),c,c) \lor Q(c)$ yields the new 
clause
$x_1 \foneq f(x)$ $\lor$ $x_2 \foneq c \lor x_3 \foneq c \lor P(x_1, x_2, x_3) \lor Q(c)$.
\end{example}

\noindent
The clause resulting from flattening $L(t_1, \dots, t_n)$ in $L(t_1, \dots, t_n) \lor C'$ is equivalent to an implication of the form 
\mbox{$(x_1 \foeq t_1 \land \dots \land x_n \foeq t_n) \rightarrow (L(x_1, \dots, x_n) \lor C')$}. 
Thus, flattening preserves equivalence.
Using flattening, we can define \emph{flat resolvents}. Intuitively, flat resolvents are obtained by first
flattening literals and then resolving them. 
This enables us to resolve literals that might otherwise not be unifiable.

\begin{definition}
Let $C = L \lor C'$ and \mbox{$D = N_1 \lor \dots \lor N_l \lor D'$} with $l>0$ be clauses such that the literals $L, \bar N_1, \dots, \bar N_l$ 
have the same predicate symbol and polarity. Let furthermore $\flatt{C}$ and~$\flatt{D}$ be obtained from $C$ and $D$, respectively, by flattening $L, N_1, \dots, N_l$ and denote the flattened literals by $\flatt{L}, \flatt{N_1}, \dots, \flatt{N_l}$.
The resolvent 
\begin{align*}
	(\flatt{C} \setminus \{\flatt{L}\})\sigma\ \lor (\flatt{D} \setminus \{\flatt{N_1}, \dots, \flatt{N_l}\})\sigma
\end{align*} 
of $\flatt{C}$ and $\flatt{D}$, 
with $\sigma$ being an $\mgu$ of $\flatt{L}, \flatt{\bar N_1}, \dots, \flatt{\bar N_l}$, 
is a \emph{flat $L$-resolvent} of $C$ and~$D$.
\end{definition}

\noindent
Note that the unifier $\bigcup_{i = 1}^{l} \{y_{ij} \mapsto x_j \mid 1 \leq j \leq n\}$ of $L(x_1, \dots, x_n)$, $\bar N_1(y_{11}, \dots, y_{1n})$, \dots, $\bar N_l(y_{l1}, \dots, y_{ln})$ is a most general unifier
(cf.~%
\cite{baader98_term_rewriting}).
\todo{Maybe we can find a simpler formulation for the above or we even leave it out.}
 
\begin{example}\label{ex:flat_resolvent}
Let $C = P(a)$ and $D = \neg P(b)$ \textup{(}cf.~Example~\ref{ex:blocking_fails_with_equality}\textup{)}.
By flattening $P(a)$ in $C$ and $\neg P(b)$ in $D$ we obtain
$\flatt{C} = x_1 \foneq a \lor P(x_1)$ and $\flatt{D} = y_1 \foneq b \lor \neg P(y_1)$, respectively.
Their 
\iftautology
\textup{(}non-tautological\textup{)} 
\fi
resolvent
$x_1 \foneq a \lor x_1 \foneq b$
\iftautology\else
\textup{(}which is not valid\textup{)} 
\fi
is a flat \mbox{$P(a)$-resolvent} of $C$ and $D$.
\end{example}


\noindent
The following definition prohibits blocking by an equality literal.
This is because equality must be treated specially
in our extension of the flipping argument (see below).
After this intuitive discussion, we formally define equality-blocking as follows:

\begin{definition}\label{def:fo_bc_noneq_binary_resolution}
A clause $C$ is \emph{equality-blocked} by a literal $L \in C$ in a formula $F$ if 
the predicate of $L$ is not $\foeq$ and
all flat $L$-resolvents of $C$ 
with clauses in $F \setminus \{C\}$ 
are 
\iftautology
tautologies.
\else
valid.
\fi
\end{definition}

\noindent
\iftautology\else
Note that in the presence of equality, clauses without complementary literals, like $x \foeq x$, can be valid.
\fi
Before we prove redundancy, we consider the 
following example that stems from 
a first-order encoding of an
AI-benchmark problem known as ``Who killed Aunt Agatha?''~\cite{pelletier86_75problems}
and that illustrates the power of equality-blocked~clauses:

\begin{example}\label{ex:eq_blocking_lives}
Let $F$ be the following set of four clauses: 
$\{L(a), L(b), L(c), \neg L(x) \lor x \foeq a \lor x \foeq b \lor x \foeq c\}$.
Intuitively, the clauses $L(a)$, $L(b)$, and $L(c)$ encode that there are three living
individuals: Agatha, Butler, and Charles. 
The clause $\neg L(x) \lor x \foeq a \lor x \foeq b \lor x \foeq c$ encodes that
these three individuals are the only living individuals.
We can observe that all four clauses are equality-blocked in $F$.
For instance, let $C = L(a)$.
There exists one flat $L(a)$-resolvent of~$C$: the 
\iftautology
tautology
\else
valid clause
\fi
$x_1 \foneq a \lor x_1 \foneq x \lor x \foeq a \lor x \foeq b \lor x \foeq c$,
obtained by resolving the clause $x_1 \foneq a \lor L(x_1)$ with 
$y_1 \foneq x \lor \neg L(y_1) \lor x \foeq a \lor x \foeq b \lor x \foeq c$.
\end{example}

\noindent
In order to show that equality-blocked clauses are redundant, we introduce the notion of \emph{equivalence flipping}. 
Intuitively, equivalence flipping of a ground literal $L(t_1, \dots, t_n)$ turns a propositional assignment $\alpha$ into
an assignment $\alpha'$ by inverting the truth value of $L(t_1, \dots, t_n)$ 
as well as that of all $L(s_1, \dots, s_n)$ for which $\alpha$ satisfies \mbox{$t_1 \foeq s_1, \dots, t_n \foeq s_n$}. 

\begin{definition}
Let $\alpha$ be a propositional assignment and $L(t_1, \dots, t_n)$ a ground literal with predicate symbol~$P$ other than $\foeq$. 
The assignment $\alpha'$, obtained by \emph{equivalence flipping} the truth value of $L(t_1, \dots, t_n)$, is defined as follows:
\begin{align*}
	\alpha'(A) =
	\begin{cases}
		1 - \alpha(A) 	& \text{if $A = P(s_1, \dots, s_n)$ and}\\
					& \text{\hphantom{if }$\alpha(t_i \foeq s_i) = \true$ for all $1 \leq i \leq n$,}\\
		\alpha(A)		& \text{otherwise.}
	\end{cases}
\end{align*}
\end{definition}

\noindent
Obviously,  equivalence flipping preserves the truth of instances of the equality axioms,
leading to the equality counterpart of Lemma \ref{thm:fo_flipping_works}:

\begin{lemma}\label{thm:foeq_flipping_works}
Let $C$ be equality-blocked by $L$ in $F$, 
and $\alpha$ a propositional assignment that satisfies all ground instances of the equality axioms, $\eqaxioms$, 
but falsifies a ground instance $C\lambda$ of $C$. 
Then, the assignment $\alpha'$, obtained from $\alpha$ by equivalence flipping the truth value of $L\lambda$, 
satisfies all the ground instances of clauses in $\eqaxioms \cup F \setminus \{C\}$ that are satisfied by $\alpha$.
\end{lemma}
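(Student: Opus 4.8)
The plan is to mirror the flipping argument of Lemma~\ref{thm:fo_flipping_works}, replacing the ordinary flip by equivalence flipping and $L$-resolvents by flat $L$-resolvents. Throughout I write $s \eqrel t$ to abbreviate $\alpha(s \foeq t) = \true$; since $\alpha$ satisfies all ground instances of $\eqaxioms$, the relation $\eqrel$ is a congruence, so whenever $s_i \eqrel s_i'$ for all $i$ the ground atoms $P(s_1,\dots,s_n)$ and $P(s_1',\dots,s_n')$ receive the same value under $\alpha$. Assume W.l.o.g.\ that $L$ is positive, $L\lambda = P(t_1,\dots,t_n)$ (the negative case is symmetric), so that equivalence flipping inverts exactly the atoms $P(s_1,\dots,s_n)$ with $t_i \eqrel s_i$ for all $i$; call such atoms, and the literals over them, \emph{active}. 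I first dispatch the equality axioms: E1, E2, and every E3-instance for a predicate other than $P$ mention only atoms untouched by equivalence flipping. For a ground E3-instance $r_1 \foneq w_1 \lor \dots \lor r_n \foneq w_n \lor \neg P(r_1,\dots,r_n) \lor P(w_1,\dots,w_n)$ I may assume $r_i \eqrel w_i$ for all $i$ (else some equality literal is true and survives), whence the two $P$-atoms are active together or not at all; when both are flipped the congruence $\alpha(P(r_1,\dots,r_n)) = \alpha(P(w_1,\dots,w_n))$ is preserved, so $\alpha'$ still satisfies the instance.

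The heart of the argument concerns a clause $D \in F \setminus \{C\}$ with ground instance $D\tau$ satisfied by $\alpha$; I must show $\alpha'$ satisfies $D\tau$. Suppose not, so $\alpha'$ falsifies every literal of $D\tau$. Any literal that $\alpha$ satisfied but $\alpha'$ falsifies must be active, and since $\alpha$ falsifies $C\lambda$ and hence $L\lambda = P(t_1,\dots,t_n)$, the congruence forces \emph{every} active atom to be false under $\alpha$. Therefore no active positive literal can occur in $D\tau$ (it would be false under $\alpha$, hence true under $\alpha'$), so the $\alpha$-satisfied literals of $D\tau$ are precisely the active negative literals $N_1\tau,\dots,N_l\tau$ with $l \geq 1$, while every literal of $D'\tau := (D \setminus \{N_1,\dots,N_l\})\tau$ is false under $\alpha$. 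As $L$ and $\bar N_1,\dots,\bar N_l$ share predicate $P$ and positive polarity, $C$ and $D$ admit the flat $L$-resolvent
\[
R = \bigvee_{i} x_i \foneq a_i \;\lor\; C' \;\lor\; \bigvee_{j,i} x_i \foneq u_{ji} \;\lor\; D',
\]
where the $a_i$ are the arguments of $L$ in $C$, the $u_{ji}$ the arguments of $N_j$ in $D$, $C' = C \setminus \{L\}$, and $x_1,\dots,x_n$ are the fresh flattening variables onto which the unifier collapses all flattened literals. Since $C$ is equality-blocked by $L$, the clause $R$ is valid.

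I would then instantiate $R$ by the ground substitution $\rho$ that acts as $\lambda$ on $\var(C)$, as $\tau$ on $\var(D)$, and sends each $x_i$ to $t_i$. Under $\rho$ the first block of literals becomes $t_i \foneq t_i$ and the third becomes $t_i \foneq s_{ji}$ (where $N_j\tau = \neg P(s_{j1},\dots,s_{jn})$); all of these are false under $\alpha$, by reflexivity and by activeness ($t_i \eqrel s_{ji}$), respectively. Because $R$ is valid and $\alpha$ models $\eqaxioms$, $\alpha$ satisfies $R\rho$; and since $\alpha$ falsifies $C\lambda$ it falsifies $C'\lambda$, so $\alpha$ must satisfy some literal of $D'\tau$ — contradicting that $D'\tau$ is entirely false under $\alpha$. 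This rules out the bad case, so $\alpha'$ satisfies $D\tau$, completing the claim for clauses of $F \setminus \{C\}$.

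I expect the main obstacle to be the passage from validity of the first-order clause $R$ to propositional satisfaction of its ground instance $R\rho$ by $\alpha$: it relies on reading $\alpha$ together with its satisfaction of $\eqaxioms$ as an equality-respecting interpretation, via the quotient of the Herbrand universe by $\eqrel$, so that validity of $R$ transfers to $\alpha(R\rho) = \true$. The remaining delicacy is conceptual rather than computational, namely recognising that equivalence flipping inverts an entire $\eqrel$-class uniformly and that the equality literals of the flat resolvent encode exactly membership in that class; this is what makes those literals vanish under $\alpha$ and forces a genuine literal of $D'\tau$ to carry the satisfaction, which is the precise point where flattening (rather than ordinary resolution) is indispensable.
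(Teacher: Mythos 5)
Your proof is correct and takes essentially the same route as the paper's: isolate the clauses $D\tau$ whose satisfaction is endangered as those whose $\alpha$-satisfied literals are $\bar L$-literals over atoms in the $\eqrel$-class of the atom of $L\lambda$, form the corresponding flat $L$-resolvent $R$, instantiate it by the substitution sending the flattening variables $x_i$ to $t_i\lambda$ and acting as $\lambda$ on $\var(C)$ and $\tau$ on $\var(D)$, and observe that all the equality literals and $C'\lambda$ are false under $\alpha$, so some literal of $D'\tau$, untouched by the flipping, must carry the satisfaction. In fact your write-up is more explicit than the paper's at two points it glosses over: the preservation of the E3-instances under equivalence flipping (the paper dismisses this as obvious just before the lemma) and the quotient-of-the-Herbrand-universe construction needed to pass from first-order validity of $R$ to propositional satisfaction of $R\rho$ by $\alpha$.
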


\begin{proof}
Let $L = L(t_1, \dots, t_n)$ and $C = L \lor C'$ and suppose $\alpha$ falsifies a ground instance $C\lambda$ of~$C$. 
By definition, the only clauses that are affected by the equivalence flipping of $L(t_1, \dots, t_n)\lambda$ 
are clauses of the form $D\tau$, 
with \mbox{$D \in F \setminus \{C\}$} and $\bar L(s_1, \dots, s_n)\tau \in D\tau$ 
such that \mbox{$\alpha(t_i\lambda \foeq s_i\tau) = \true$} for $1 \leq i \leq n$. 

Let $D\tau$ be such a clause
and let $\bar L(s_1, \dots, s_n)$,\hspace{1.4pt}$\dots$, $\bar L(r_1, \dots, r_n)$ be all literals in $D$ such that
$\alpha$ satisfies \mbox{$t_i\lambda \foeq s_i\tau$},$\dots, t_i\lambda \foeq r_i\tau$ for $1 \leq i \leq n$.
To simplify the presentation, we assume that $\bar L(s_1, \dots, s_n)$ and $\bar L(r_1, \dots, r_n)$ are all such literals.
The proof for another number of such literals is analogous.
We observe that $D$ is of the form $\bar L(s_1, \dots, s_n) \lor \bar L(r_1, \dots, r_n) \lor D'$.

Since $C$ is equality-blocked by $L(t_1, \dots, t_n)$ in~$F$, 
all flat $L(t_1, \dots, t_n)$-resolvents of $C$ are 
\iftautology
tautologies.
\else
valid.
\fi
Therefore, the flat $L(t_1, \dots, t_n)$-resolvent
\begin{align*}
	R = 
	(
	C' \lor D' \lor \bigvee_{1 \leq i \leq n} x_i \foneq t_i \lor y_i \foneq s_i \lor z_i \foneq r_i
	)\sigma
\end{align*}
is 
\iftautology
a tautology,
\else
valid,
\fi
where $\sigma$ is an $\mgu$ of the literals 
$L(x_1, \dots, x_n), L(y_1, \dots, y_n)$, and $L(z_1, \dots, z_n)$,
which were obtained by respectively flattening $L(t_1, \dots, t_n)$, $\bar L(s_1, \dots, s_n)$, and $\bar L(r_1, \dots, r_n)$.
Assume w.l.o.g.~that 
$\sigma = \{y_i \mapsto x_i \mid 1 \leq i \leq n\} \cup  \{z_i \mapsto x_i \mid 1 \leq i \leq n\}$. 
Then,
\begin{align*}
	R = 
	C' \lor D' \lor \bigvee_{1 \leq i \leq n} x_i \foneq t_i \lor x_i \foneq s_i \lor x_i \foneq r_i
	\text{.} 
\end{align*}
As $R$ is 
\iftautology
a tautology,
\else
valid,
\fi
the assignment $\alpha$ must satisfy all ground instances of~$R$.
Consider therefore the following substitution $\gamma$ that yields a ground instance $R\gamma$ of $R$:
\begin{align*}
\gamma(x) =
	\begin{cases}
		t_i\lambda	& \text{if $x \in \{x_1, \dots, x_n\}$,}\\
		x\lambda		& \text{if $x \in \var(C')$,}\\
		x\tau 		& \text{if $x \in \var(D')$.}
	\end{cases}	
\end{align*}
We observe that the ground instance $R\gamma$ of $R$ is the clause
\begin{align*}
	&C'\lambda \lor D'\tau  \lor 
	\bigvee_{1 \leq i \leq n} t_i\lambda \foneq t_i\lambda \lor t_i\lambda \foneq s_i\tau \lor t_i\lambda \foneq r_i\tau 
\end{align*}
which must be satisfied by $\alpha$.
Now, all the $t_i\lambda \foneq t_i\lambda$ are clearly falsified by $\alpha$. 
Furthermore, by assumption, $\alpha$ falsifies all the $t_i\lambda \foneq s_i\tau$ 
and all the $t_i\lambda \foneq r_i\tau$ as well as $C'\lambda$.
But then, $\alpha$ must satisfy at least one of the literals in $D'\tau$.
Since 
none of the literals in $D'\tau$ are affected by equivalence flipping the truth value of $L\lambda$, 
$D'\tau$ must be satisfied by $\alpha'$.
It follows that $\alpha'$ satisfies $D\tau$.
\end{proof}

\noindent
Using Lemma \ref{thm:foeq_flipping_works} instead of Lemma \ref{thm:fo_flipping_works} 
and replacing the notion of flipping by that of equivalence flipping, 
the proof of the following lemma is analogous to the one of Lemma \ref{thm:f_assignments_can_be_extended}:

\begin{lemma}\label{thm:feq_assignments_can_be_extended}
Let $C$ be a clause that is equality-blocked in $F$. 
Let furthermore $F'$ and $F_C$ be finite sets of ground instances of clauses in $F \setminus \{C\}$ and $\{C\}$, respectively. 
Then, every assignment that propositionally satisfies all the ground instances of  $\eqaxioms$ as well as $F'$ 
can be turned into one that satisfies $F' \cup F_C$ and all the ground instances of $\eqaxioms$.
\end{lemma}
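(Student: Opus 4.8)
The plan is to run the proof of Lemma~\ref{thm:f_assignments_can_be_extended} essentially verbatim, substituting equivalence flipping for ordinary flipping and Lemma~\ref{thm:foeq_flipping_works} for Lemma~\ref{thm:fo_flipping_works}. Let $C = L \lor C'$ be equality-blocked by $L$ in $F$, let $P$ be the predicate symbol of $L$ (so $P$ is not $\foeq$), and let $\alpha$ be an assignment satisfying $F'$ together with all ground instances of $\eqaxioms$. If $\alpha$ already satisfies $F_C$ we are done; otherwise some ground instance $C\lambda \in F_C$ is falsified. I would repair falsified instances one at a time: for such a $C\lambda$, apply equivalence flipping to the truth value of $L\lambda$ to obtain a new assignment, and iterate.

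First I would record what a single repair step preserves. By Lemma~\ref{thm:foeq_flipping_works}, equivalence flipping $L\lambda$ keeps every ground instance of $\eqaxioms \cup (F \setminus \{C\})$ that was satisfied beforehand satisfied afterwards. In particular $F'$ stays satisfied and the assignment keeps satisfying all ground instances of $\eqaxioms$, so both invariants are maintained across the whole iteration. Consequently the only clauses a repair step can newly falsify are other ground instances of $C$ itself, namely those that were satisfied only through a literal $\bar L(s_1,\dots,s_n)\tau$ with $\alpha(t_i\lambda \foeq s_i\tau) = \true$ for all $i$, whose atom is therefore inverted by the flip.

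The heart of the argument---and the step I expect to be the main obstacle---is to show that this iteration makes monotone progress and hence terminates, mirroring the ``same polarity'' observation of Lemma~\ref{thm:f_assignments_can_be_extended}. Two facts drive this. Because the predicate of $L$ is not $\foeq$, equivalence flipping never touches an equality atom; hence the interpretation of $\foeq$, and with it the partition of the $P$-atoms into congruence classes, stays fixed throughout. And because the running assignment always satisfies the equality axioms (in particular the congruence axiom \textup{(E3)} for $P$), all $P$-atoms lying in one congruence class share a single truth value. Equivalence flipping the false literal $L\lambda$ therefore inverts an \emph{entire} class of $L$-instances at once, turning them uniformly from false to true, while leaving untouched every class that is already true-for-$L$ (such a class is disjoint from $\lambda$'s class, since equal arguments would force equal truth values).

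From these two facts the polarity argument of Lemma~\ref{thm:f_assignments_can_be_extended} goes through at the level of congruence classes: once a ground instance $C\tau$ has been satisfied by making its blocking literal $L\tau$ true, no later repair can falsify it, because later repairs only flip classes that are currently false-for-$L$ and never revert a true class. Since $F_C$ is finite, only finitely many congruence classes of blocking-literal instances are relevant, and each is flipped at most once, so repeatedly repairing falsified instances of $C$ terminates. The resulting assignment satisfies $F' \cup F_C$ and, by the preservation noted above, still satisfies all ground instances of $\eqaxioms$, which is exactly what is claimed.
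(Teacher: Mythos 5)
Your proof is correct and takes essentially the same route as the paper: the paper's own justification of this lemma is precisely the remark that it is analogous to Lemma~\ref{thm:f_assignments_can_be_extended}, with flipping replaced by equivalence flipping and Lemma~\ref{thm:fo_flipping_works} replaced by Lemma~\ref{thm:foeq_flipping_works}. Your congruence-class argument (equality atoms are never touched, axiom (E3) forces each class to be uniform, so a flipped class turns true-for-$L$ permanently) correctly fills in the monotonicity and termination details that the paper leaves implicit.
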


\noindent
Using Lemma \ref{thm:feq_assignments_can_be_extended} and the equality variant of Herbrand's Theorem (Theorem~\ref{thm:herbrand_with_equality}), the proof of the following theorem is similar to that of Theorem~\ref{thm:bc_fo_is_a_redundancy_property}:

\begin{theorem}\label{thm:equalityblocked_redundant}
If a clause is equality-blocked in a formula~$F$, it is redundant \wrt $F$.
\end{theorem}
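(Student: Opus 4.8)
The plan is to mirror the proof of Theorem~\ref{thm:bc_fo_is_a_redundancy_property}, replacing the ordinary Herbrand theorem by its equality variant (Theorem~\ref{thm:herbrand_with_equality}) and Lemma~\ref{thm:f_assignments_can_be_extended} by its equality counterpart (Lemma~\ref{thm:feq_assignments_can_be_extended}). Suppose $C$ is equality-blocked by $L$ in $F$ and that $F \setminus \{C\}$ is satisfiable; by the remark following Definition~\ref{def:clause_redundancy} it suffices to show that $F \cup \{C\}$ is satisfiable. By Theorem~\ref{thm:herbrand_with_equality}, this reduces to showing that every finite set of ground instances of clauses in $(F \cup \{C\}) \cup \eqaxioms$ is propositionally satisfiable.

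First I would fix such a finite set and split it into three finite parts: a set $F'$ of ground instances of clauses in $F \setminus \{C\}$, a set $F_C$ of ground instances of $C$, and a set $E'$ of ground instances of $\eqaxioms$. The goal is then to exhibit a single propositional assignment satisfying $F' \cup F_C \cup E'$. To that end I would exploit the assumed satisfiability of $F \setminus \{C\}$: by Theorem~\ref{thm:herbrand_with_equality} every finite set of ground instances of clauses in $(F \setminus \{C\}) \cup \eqaxioms$ is propositionally satisfiable, so by propositional compactness there is one assignment $\alpha$ that satisfies \emph{all} ground instances of $(F \setminus \{C\}) \cup \eqaxioms$ at once. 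In particular $\alpha$ satisfies $F'$ and every ground instance of $\eqaxioms$, which are exactly the hypotheses required by Lemma~\ref{thm:feq_assignments_can_be_extended}.

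Applying that lemma turns $\alpha$ into an assignment $\alpha'$ that satisfies $F' \cup F_C$ together with all ground instances of $\eqaxioms$. Since the clauses in $E'$ are among the ground instances of $\eqaxioms$, the assignment $\alpha'$ satisfies $F' \cup F_C \cup E'$, i.e.\ the chosen finite set. As the finite set was arbitrary, Theorem~\ref{thm:herbrand_with_equality} yields the satisfiability of $F \cup \{C\}$, and hence the redundancy of $C$ \wrt $F$.

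The step I expect to be the crux is the passage to a \emph{global} assignment $\alpha$ satisfying all ground instances of $\eqaxioms$ simultaneously, rather than merely the finite set $E'$ present in the chosen sample. This stronger precondition is forced by the formulation of Lemma~\ref{thm:feq_assignments_can_be_extended}: equivalence flipping inverts the truth values of a possibly infinite family of atoms $L(s_1,\dots,s_n)$, so preservation of the equality axioms can only be guaranteed if all of their ground instances are under control. Propositional compactness is precisely what lets the finitary conclusion of Theorem~\ref{thm:herbrand_with_equality} supply such a global $\alpha$; once this is in place, the remainder is a routine transcription of the equality-free proof of Theorem~\ref{thm:bc_fo_is_a_redundancy_property}.
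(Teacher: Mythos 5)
Your proof is correct and follows essentially the same route as the paper's: both obtain, via Theorem~\ref{thm:herbrand_with_equality} together with propositional compactness, a single assignment satisfying \emph{all} ground instances of $(F \setminus \{C\}) \cup \eqaxioms$, then modify it using Lemma~\ref{thm:feq_assignments_can_be_extended} to satisfy any finite set of ground instances of $F \cup \{C\} \cup \eqaxioms$, and conclude again by Theorem~\ref{thm:herbrand_with_equality}. The crux you identify---that compactness is needed to supply a global assignment controlling all instances of the equality axioms, not just a finite sample---is exactly the point the paper's proof makes with its parenthetical appeal to compactness.
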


\begin{proof}
Let $C$ be equality-blocked in $F$.
Assuming that $F \setminus \{C\}$ is satisfiable, we conclude that there exists an assignment $\alpha$
that propositionally satisfies all ground instances of clauses in $(F \setminus \{C\}) \cup \eqaxioms$ 
(by using Theorem~\ref{thm:herbrand_with_equality} together with compactness).
\todo{Should we state the compactness theorem?}
Using Lemma~\ref{thm:feq_assignments_can_be_extended}, one can then show that
every finite set of ground instances of clauses in $F \cup \{C\} \cup \eqaxioms$ can be satisfied 
by modifying~$\alpha$.
It follows, again by Theorem~\ref{thm:herbrand_with_equality}, that $F \cup \{C\}$ is satisfiable.
\end{proof}

\section{Complexity of Detecting Blocked Clauses}
\label{sec:complexity}



In this section, we show that deciding whether a clause $C$ is 
blocked (or equality-blocked) by a literal $L$ in a formula $F$
can be decided in polynomial time.
From the definitions of blocking (Definition~\ref{def:fo_bc_noneq}) 
and equality-blocking (Definition \ref{def:fo_bc_noneq_binary_resolution}) 
this is not obvious, because a direct implementation of these definitions 
would require to test exponentially 
many (flat) $L$-resolvents of $C$ for validity.
Although the number of clauses in $F \setminus \{C\}$ with which $C$ could 
possibly be resolved is linearly bounded by the size of $F$,  
there can be exponentially many $L$-resolvents of $C$ with a single 
clause $D \in F \setminus \{C\}$.
For example, consider the clause $C = L \lor C'$ and 
assume that $F \setminus \{C\}$ contains a clause 
$D = N_1 \lor \dots \lor N_n \lor D'$ such that the literals 
$L,\bar N_1, \dots, \bar N_n$ are unifiable. 
We then have one $L$-resolvent of $C$ and $D$ for every non-empty subset of 
$N_1, \dots, N_n$. 
Therefore, there are $2^n - 1$ such $L$-resolvents whose validity we have to check.
To show how this can be done in polynomial time, we first argue that the validity of a (flat) 
$L$-resolvent is decidable in polynomial time and then show that it actually suffices to check the validity of
only polynomially many $L$-resolvents.


In the case without equality, checking the validity of an 
$L$-resolvent basically amounts to looking 
for a complementary pair of literals. Assume we want to check 
the validity of an $L$-resolvent 
$(C' \lor D')\sigma$ of clauses $C' \lor L$ and $N_1 \lor \dots \lor N_n \lor D'$ where $\sigma$ is an $\mgu$
of $L,\bar N_1, \dots, \bar N_n$. Although the size of $\sigma$ can be exponential in the worst case,
this exponential blow up can be avoided by not computing $\sigma$ explicitly but only computing the \emph{unification closure} \cite{kanellakis89_ccanduc} of $L,\bar N_1, \dots, \bar N_n$, which can be done in polynomial time.
The unification closure is basically an equivalence relation under which two literals are considered equivalent if
they are unified by a most general unifier of $L,\bar N_1, \dots, \bar N_n$.
Checking the validity of $(C' \lor D')\sigma$ then boils down to checking whether $C' \lor D'$ contains two literals 
that are complementary \wrt the unification closure.
\todo{Maybe add that it is perfectly natural to consider blocking only in the case where no equality
is present, because otherwise blocking does anyhow not guarantee redundancy.}

In the case of equality-blocking, we work with flat $L$-resolvents.
Computing a flat $L$-resolvent is easy
since---as pointed out in the section on equality-blocked clauses---there 
exists a trivial (and small) $\mgu$ of the flattened literals.
Furthermore, a flat $L$-resolvent $R$ is valid
if and only if the negation of its universal closure $\neg \forall R$
is unsatisfiable. After skolemization (which introduces fresh constants for the variables of $R$),
the formula $\neg \forall R$ becomes a conjunction of ground (equational) literals
and can therefore be efficiently decided by a congruence-closure algorithm~(cf.~\cite{shostak78_equality_reasoning}).

\todo{Decide for below and use consistently: Algorithm \ref{alg_lres_valid}, the algorithm, or procedure?}
 Algorithm~\ref{alg_lres_valid} shows a
polynomial-time procedure for checking
whether all $L$-resolvents of a candidate clause $C = L \lor C'$ and 
a partner clause $D$ are valid.
We do this here for the non-equational case and leave the details of the equational case for the appendix (see Appendix~\ref{sec_equational_cplx}).
With  this procedure 
 deciding whether a clause $C$ is blocked in a formula $F$
can be done in polynomial time by iterating over all the potential blocking literals $L \in C$ 
and all the partner clauses $D \in F \setminus \{C\}$. Practical details on how to efficiently 
implement this top-level iteration will be discussed in Section~\ref{sec:implementation}.

\begin{algorithm}[t]
\caption{Testing validity of $L$-resolvents}
\label{alg_lres_valid}
\begin{algorithmic}[1]
\Require
\Statex A candidate clause $C = L \lor C'$ and a partner clause $D = N_1 \lor \ldots \lor N_n \lor D'$,
\Statex where the literals $N_1, \ldots, N_n$ are all the literals of $D$ which pairwise unify with $\bar{L}$
\Ensure
\Statex Indication whether all $L$-resolvents of $L \lor C'$ and $D$ are valid
\Statex
\For{$k \gets 1,\ldots,n$}  \label{alg_lres_valid:main_loop}
\State $N \gets \{N_k\}$
	\While{$L$ is unifiable with the literals in $\bar N$ via an $\mgu$ $\sigma$} \label{alg_lres_valid:while_loop}	 
	\State Let $K$ contain all pairs of complementary literals in the $L$-resolvent $C'\sigma \lor (D \setminus N)\sigma$
	\If{$K = \emptyset$} \label{alg_lres_valid:checkvalid} 
		\State	\Return	NO  \label{alg_lres_valid:fail}	
	\EndIf
	\If{every pair of complementary literals in $K$ contains a literal $N_i\sigma$} \label{alg_lres_valid:branch}
		\State $N \gets N \cup \{N_i \mid \text{$N_i\sigma$ is part of a complementary pair}\}$ \label{alg_lres_valid:branch_yes}
	\Else
		\State \textbf{break} (the while loop) \label{alg_lres_valid:while_loop_end}
	\EndIf
	\EndWhile
\EndFor
\State	\Return	YES
\end{algorithmic}
\end{algorithm}

The inputs of Algorithm~\ref{alg_lres_valid} are 
a candidate clause $C = L \lor C'$ and a partner clause $D = N_1 \lor \ldots \lor N_n \lor D'$,
where the literals $N_1, \ldots, N _n$ are all the literals of $D$ which pairwise unify with $\bar{L}$.
It is easy to see that the running time of the procedure is quadratic in~$n$:
We perform $n$ iterations of the for loop and at most $n$ iterations of the inner while loop (since there are no more than $n-1$ literals $N_i$ that can be added to $N$ in line~\ref{alg_lres_valid:branch_yes}).
Therefore, only quadratically many $L$-resolvents are explicitly tested for validity. 
By proving that Algorithm~\ref{alg_lres_valid} is a sound and complete procedure for testing whether
all $L$-resolvents of a candidate clause $C$ with a partner clause $D$ are valid, we show that this is sufficient:

\begin{theorem}\label{thm:alg_soundness}
Algorithm \ref{alg_lres_valid} returns YES if and only if all $L$-resolvents of $C$ with $D$ are valid.
\end{theorem}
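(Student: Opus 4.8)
The plan is to prove the two directions separately, identifying each $L$-resolvent of $C = L \lor C'$ with $D$ with a nonempty set $S \subseteq \{N_1, \dots, N_n\}$ of partner literals that are simultaneously resolvable with $L$. Writing $\sigma_S$ for a corresponding $\mgu$, the resolvent is $R_S = C'\sigma_S \lor (D \setminus S)\sigma_S$, and $R_S$ is valid iff it contains a complementary pair. The direction ``all $L$-resolvents valid $\Rightarrow$ YES'' I would prove by contraposition, and it is essentially immediate: the algorithm answers NO only at line~\ref{alg_lres_valid:fail}, where $N$ is a nonempty resolvable set (the guard of the while loop holds) and the resolvent $C'\sigma \lor (D \setminus N)\sigma$ contains no complementary pair ($K = \emptyset$). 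Thus $R_N$ is a genuine invalid $L$-resolvent, so a NO answer always witnesses one; hence if all $L$-resolvents are valid the algorithm must return YES.

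For the converse, ``YES $\Rightarrow$ all $L$-resolvents valid,'' I would first isolate a monotonicity lemma: if $N \subseteq S$ are both resolvable then $\sigma_S = \sigma_N\gamma$ for some $\gamma$, and any complementary pair of $R_N$ whose two literals are not resolved away in passing from $N$ to $S$ reappears as a complementary pair of $R_S$. This lemma justifies the break at line~\ref{alg_lres_valid:while_loop_end}: there the witnessing complementary pair lies entirely within the persistent literals $C' \cup D'$, so it survives into every $R_S$ with $S \supseteq N$ and certifies the validity of that whole up-set at once.

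The core of the argument is then a minimal-counterexample analysis. Assuming the algorithm returns YES but some resolvable $S^*$ has $R_{S^*}$ invalid, I would examine the for-iteration with seed $k = \min\{i : N_i \in S^*\}$ and maintain the invariant $N \subseteq S^*$. The decisive point is a polarity observation: all $N_i$ share the predicate and polarity of $\bar L$, so two candidate literals can never be complementary; therefore in any complementary pair of $R_N$ that involves a candidate literal $N_i\sigma$, the partner is a non-candidate literal of $C' \cup D'$, which is never resolved away. While $N \subsetneq S^*$, invalidity of $R_{S^*}$ together with the monotonicity lemma forces every complementary pair of $R_N$ to be destroyed on the way to $R_{S^*}$; by the polarity observation this can only happen by resolving away the candidate literal of the pair, so that candidate belongs to $S^*$. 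Consequently the branch test at line~\ref{alg_lres_valid:branch} succeeds (the loop never breaks), and every candidate added at line~\ref{alg_lres_valid:branch_yes} is forced into $S^*$, preserving the invariant. Since $N$ strictly grows inside the finite set $S^*$, the loop eventually reaches $N = S^*$, where $R_N = R_{S^*}$ is invalid, $K = \emptyset$, and the algorithm returns NO, contradicting the assumed YES.

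I expect the forcing step to be the main obstacle: one must show that ``add all critical candidates'' neither overshoots nor undershoots $S^*$. The delicate case is when several $N_i$ collapse to the same literal under $\sigma$; here I would argue that if any one such copy survived in $R_{S^*}$ it would pair with the persistent partner and render $R_{S^*}$ valid, so \emph{all} of them must lie in $S^*$ --- which is exactly the set line~\ref{alg_lres_valid:branch_yes} adds. Getting this bookkeeping right, on the basis of the polarity observation and the monotonicity lemma, is the crux of the proof.
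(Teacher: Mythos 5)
Your proof is correct and rests on the same two pillars as the paper's own: the direction ``all $L$-resolvents valid $\Rightarrow$ YES'' is handled identically (a NO at line~\ref{alg_lres_valid:fail} explicitly exhibits an invalid $L$-resolvent), and the converse rests on the same monotonicity fact, namely that a complementary pair of a tested resolvent whose two literals are not resolved away persists into the resolvent for any larger literal set, via the substitution $\gamma$ with $\sigma'\gamma = \sigma$ supplied by most-generality of $\sigma'$. What differs is the organization, which is dual to the paper's: the paper argues directly, fixing an arbitrary $L$-resolvent with literal set $N$, choosing a \emph{maximal explicitly tested} subset $N' \subseteq N$, and using maximality to extract a complementary pair whose literals both lie outside $N$ and hence survive into the target resolvent; you argue by contradiction, seeding the for-loop inside a hypothetical invalid set $S^*$ and showing that the while-loop is forced to stay inside $S^*$ and to grow strictly until $S^*$ itself is tested, yielding NO. Both versions turn on exactly the same crux: when every complementary pair of the current resolvent involves a candidate literal, all candidates occurring in such pairs must lie in the target set, so the extension at line~\ref{alg_lres_valid:branch_yes} does not overshoot. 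That crux requires your polarity observation (all $N_i$ share the predicate symbol and polarity of $\bar L$, so no two candidates are complementary, and each complementary pair contains at most one candidate); you state and use it explicitly, whereas the paper's proof relies on it only tacitly when asserting that the algorithm would have continued with some $N''$ satisfying $N' \subset N'' \subseteq N$. On that point your write-up is, if anything, more complete than the paper's.
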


\begin{proof}
For the $\Leftarrow$-direction, assume that all $L$-resolvents of $C$ with $D$ are valid, i.e., 
they all contain at least one pair of complementary literals. 
It follows that line~\ref{alg_lres_valid:fail} is never executed and therefore the algorithm returns YES.

For the $\Rightarrow$-direction, let $C = C' \lor L$ and let $R = C'\sigma \lor (D \setminus N)\sigma$ 
be an $L$-resolvent of $C$ and $D$ with $\sigma$
being an $\mgu$ of $L$ and $\bar N$ (note that $N$ is a \emph{set} of literals).
If the algorithm has explicitly tested $R$ for validity (in line~\ref{alg_lres_valid:checkvalid}), then the statement clearly holds.
Assume thus that $R$ has not been explicitly tested for validity.
Now, let $N'$ be a maximal subset of $N$ for which the validity of the $L$-resolvent $R' = C'\sigma' \lor (D \setminus N')\sigma'$
(with $\sigma'$ being an $\mgu$ of $L$ and $\bar N'$) has been explicitly tested.\footnote{In other words, 
$N'$ should be a subset of $N$ for which there exists no other subset $N''$ of $N$ such that
\begin{inparaenum}[(1)]
	\item $N' \subset N''$, and
	\item the validity of an $L$-resolvent $C'\sigma'' \lor (D \setminus N'')\sigma''$ has been explicitly tested by the algorithm in line~5.
\end{inparaenum}
}
Clearly, such an $N'$ must exist since the algorithm explicitly tests the validity of all binary resolvents upon $L$ 
(in the first iteration of the while-loop, for every iteration of the for-loop).
As the algorithm returned YES, we know that $R'$ must be valid.

From $N'$ being maximal it follows that $R'$ contains a complementary pair of literals 
$P(t_1, \dots, t_n)\sigma'$ and $\neg P(s_1, \dots, s_n)\sigma'$ such that $P(t_1, \dots, t_n)$ and $\neg P(s_1, \dots, s_n)$ are both not contained in $N$: 
 otherwise the algorithm would have continued by testing the validity of
an $L$-resolvent $C'\sigma'' \lor (D \setminus N'')\sigma''$  with $N' \subset N'' \subseteq N$ 
(by extending $N'$ in line~\ref{alg_lres_valid:branch_yes} and then testing validity in the next iteration of the while-loop).
It follows that $P(t_1, \dots, t_n)\sigma$ and $\neg P(s_1, \dots, s_n)\sigma$ are both contained in~$R$.
Now, since $\sigma$ unifies $L$ with $\bar N$, it unifies $L$ with $\bar N'$. Moreover, since $\sigma'$ is a most general unifier of $L$ and $\bar N'$,
it is more general than $\sigma$ and therefore
there exists a substitution $\gamma$ such that $\sigma'\gamma = \sigma$.
But then, since $P(t_1, \dots, t_n)\sigma' = P(s_1, \dots, s_n)\sigma'$
it follows that $P(t_1, \dots, t_n)\sigma = P(t_1, \dots, t_n)\sigma'\gamma = P(s_1, \dots, s_n)\sigma'\gamma = P(s_1, \dots, s_n)\sigma$
and thus $R$ is valid.
\end{proof}

\noindent
In conclusion, we have shown that it suffices to perform polynomially many 
validity checks---each of which can be performed in polynomial time---to decide whether a clause
is blocked.

\section{Blocked-Clause Elimination in First-Order Logic}
\label{sec:bce}

In this section, we present the implementation and empirical evaluation of a first-order preprocessing tool that
performs one possible application of blocked clauses, namely blocked-clause elimination~(BCE). We further discuss how
BCE eliminates pure predicates and how it is related to the existing preprocessing technique of \emph{unused definition
elimination} (UDE) by Hoder et al.~\cite{DBLP:conf/fmcad/HoderKKV12}.

\subsection{Implementation}
\label{sec:implementation}

We implemented blocked-clause elimination and equality-blocked-clause elimination for first-order logic 
as a preprocessing step in the automated theorem prover \vampire{} \cite{KovacsVoronkov:CAV:Vampire:2013}.\footnote{
A statically compiled x86\_64 executable of \vampire{} used in our experiments can be obtained from 
\url{http://forsyte.at/wp-content/uploads/vampire_bce.zip}.}
This preprocessing step can be activated by providing the command line flag \texttt{-bce on}.
Depending on whether the formula at hand contains the equality predicate or not, \vampire{} then
performs either the elimination of equality-blocked clauses or blocked clauses.
It will be performed as the last step in the preprocessing pipeline, because it relies on the input being in CNF. 
After the preprocessing, instead of proceeding to proving the formula---which is the default behavior---\vampire{} 
can be instructed to output the final set of clauses by specifying \texttt{--mode clausify} on the command line.


The top level organization of our elimination procedure, which is the same for both blocked-clause elimination and equality-blocked-clause elimination, is inspired by the approach adopted
in the propositional case by J\"arvisalo et al.\ (c.f.\ \cite{BlockedClauseElimination}, section 7).
For efficiency, we maintain an index for accessing a literal within a clause by its predicate symbol and polarity.
The main data structure is a priority queue of candidates $(L,C)$ where $L$ is a potential blocking literal in a clause $C$.
We prioritize for processing those candidates $(L,C)$ which have fewer potential resolution partners 
estimated by the number of clauses indexed with the same predicate symbol and the opposite polarity as $L$.\footnote{
We remark that, similarly to the propositional case, blocked-clause elimination in first-order logic is confluent.
This means that the resulting set of clauses is always the same regardless of the elimination order.
The ordering of candidates in our queue can therefore influence the computation time,
but not the output of our procedure.}

At the beginning, every (non-equational) literal $L$ in a clause $C$ gives rise to a candidate $(L,C)$.
We always pick the next candidate $(L,C)$ from the queue and iterate over potential resolution partners~$D$.
If we discover that a (flat) $L$-resolvent of $C$ and $D$ is not valid, further processing of $(L,C)$ 
is postponed and the candidate is ``remembered'' by the partner clause $D$. If, on the other hand, 
all the (flat) $L$-resolvents with all the possible partners $D$ have been found valid, the clause $C$ is declared blocked
and the candidates remembered by $C$ are ``resurrected'' and put back to the queue.
Their processing will be resumed by iterating over those partners which have not been tried yet.

Although, as we have shown in Section~\ref{sec:complexity}, testing whether
all the (flat) $L$-resolvents of a clause $C$ and a partner clause $D$ are valid can
be done in polynomial time, our implementation uses for efficiency reasons
an approximate solution, which only computes binary (flat) resolvents.
Then, before testing the resolvent for validity, we remove from it all the literals that
\begin{inparaenum}[(1)]
	\item are unifiable with $\bar L\sigma$ in the blocking case, or
	\item have the same predicate symbol and polarity as $\bar L$ in the equality-blocking case.
\end{inparaenum}
This still ensures redundancy and significantly improves the performance.

For testing validity of flat $L$-resolvents in the equality case,
we experimented with a complete congruence-closure procedure 
which turned out to be too inefficient. Our current implementation
only ``normalizes'' in a single pass all (sub-)terms of the literals in the flat resolvent
using the equations from the flattening,
but ignores (dis-)equations originally present in the two clauses
and does not employ the congruence rule recursively.
Our experiments show that even this limited version is effective.

\subsection{Relation to Pure Predicate Elimination and Unused Definition Elimination}
\label{sec:related}

In the propositional setting, blocked-clause elimination is known to simulate on the CNF-level
several refinements of the standard CNF encoding for circuits 
\cite{DBLP:journals/jar/JarvisaloBH12}. Somewhat analogously, 
we observe that in the first-order setting BCE simulates 
\emph{pure predicate elimination} (\ppe{}) and, under certain conditions, also
\emph{unused definition elimination} (\ude{}),
a formula-level simplification described by Hoder et al.\ \cite{DBLP:conf/fmcad/HoderKKV12}.
This section briefly recalls these two techniques and explains their relation to BCE. 
Apart from being of independent interest,
the observations made in this section 
are also relevant for interpreting the experimental results presented in Section~\ref{sec:experiments}.


We say that a predicate symbol $P$ is \emph{pure} in a formula $F$ if, in $F$, 
all occurrences of literals with predicate symbol $P$ are of the same polarity. 
If a clause $C$ contains a literal $L$ with a pure predicate symbol $P$, 
then there are no $L$-resolvents of $C$, hence it is vacuously blocked. 
Therefore, blocked-clause elimination removes all clauses that contain pure predicates
and thus simulates \ppe{}.

UDE is a preprocessing method that removes so-called unused predicate definitions from general formulas 
(i.e., formulas that are not necessarily in CNF).
Given a predicate symbol $P$ and a general formula $\varphi$ such that $P$ does not occur in $\varphi$,
a \emph{predicate definition} is a formula
\[\mathit{def}(P,\varphi) = 
\forall \vec{x}.\ P(\vec{x}) \leftrightarrow \varphi(\vec{x}).\]
Assuming we have a predicate definition 
as a conjunct within a larger formula 
$\Psi = \psi \land \mathit{def}(P,\varphi),$
the definition is \emph{unused} if $P$ does not occur in $\psi$.
(In fact, if $P$ only occurs in $\psi$ with a single polarity,
then one of the two implications of the equivalence $\mathit{def}(P,\varphi)$, corresponding to that polarity,
can be dropped by UDE.)
UDE preserves satisfiability equivalence~\cite{DBLP:conf/fmcad/HoderKKV12}.

Note that UDE operates on the level of 
general formulas 
while BCE is only defined for formulas in CNF. 
Let therefore $\mathit{def}(P,\varphi)$ be an unused predicate definition in the formula $\Psi = \psi \land \mathit{def}(P,\varphi)$ as above  
and let $\mathit{BCE}(\cnf{}(\Psi))$ be the result of eliminating all blocked clauses from a clause form translation $\cnf(\Psi)$ of $\Psi$.
We conjecture that for any ``reasonably behaved'' clausification procedure $\cnf$ (e.g., the well-known Tseitin encoding~\cite{tseitin_enc}), it holds that
$
\mathit{BCE}(\cnf{}(\Psi)) \subseteq \cnf(\psi)
$
if $\varphi$ does not contain quantifiers.
In other words, BCE simulates UDE under the above conditions.

%
The main idea behind the 
simulation would be to show that
each clause stemming from the clausification of an unused definition $\mathit{def}(P,\varphi)$
is blocked on the literal corresponding to predicate $P$.
Although further intuitions
are omitted here due to lack of space,\footnote{
See Appendix~\ref{sec_ude} for an additional discussion.}
the reason why the presence of quantifiers in the definition formula $\varphi$ poses a problem
can be highlighted on a simple example:
\begin{example}
The predicate definition $\mathit{def}(P,\exists x. Q(x)) = P \leftrightarrow \exists x. Q(x)$
can be clausified as $\neg P \lor Q(c),\ P \lor \neg Q(x)$, where $c$ is a Skolem constant corresponding to the existential quantifier.
By resolving these two clauses on $P$ we obtain the resolvent $Q(c) \lor \neg Q(x)$ which is not valid.
\end{example}

\subsection{Experimental Evaluation}
\label{sec:experiments}

We present an empirical evaluation of our implementation of blocked-clause elimination,
which is part of the preprocessing pipeline of the automated theorem prover \vampire{} \cite{KovacsVoronkov:CAV:Vampire:2013}.
In our experiments, we used the \num{15942} first-order benchmark formulas of the latest TPTP library \cite{TPTP} (version 6.4.0). 
Of these benchmarks, \num{7898} were already in CNF, 
while the remaining \num{8044} general formulas needed to be clausified by \vampire{} before being 
subjected to BCE.
This clausification step was optionally preceded by \vampire{}'s implementation of \ppe{} and \ude{} (see Section~\ref{sec:related}).
\SI{73}{\percent} of the benchmark formulas contain the equality predicate.
In these formulas, we eliminated equality-blocked clauses while in the others we eliminated blocked clauses.
All experiments were run on the StarExec compute cluster~\cite{starexec}.


\paragraph{Occurrence of Blocked Clauses.} 

Within a time limit of \SI{300}{\second} for parsing,
clausification (if needed), and subsequent blocked-clause detection and elimination 
our implementation was able to process all but one problem. 
Average/median time for detecting and eliminating blocked clauses was \SI{0.238}{\second}/\SI{0.001}{\second}.

\begin{figure}
\centering
\begin{minipage}{.49\textwidth} %
\includegraphics[width=\textwidth]{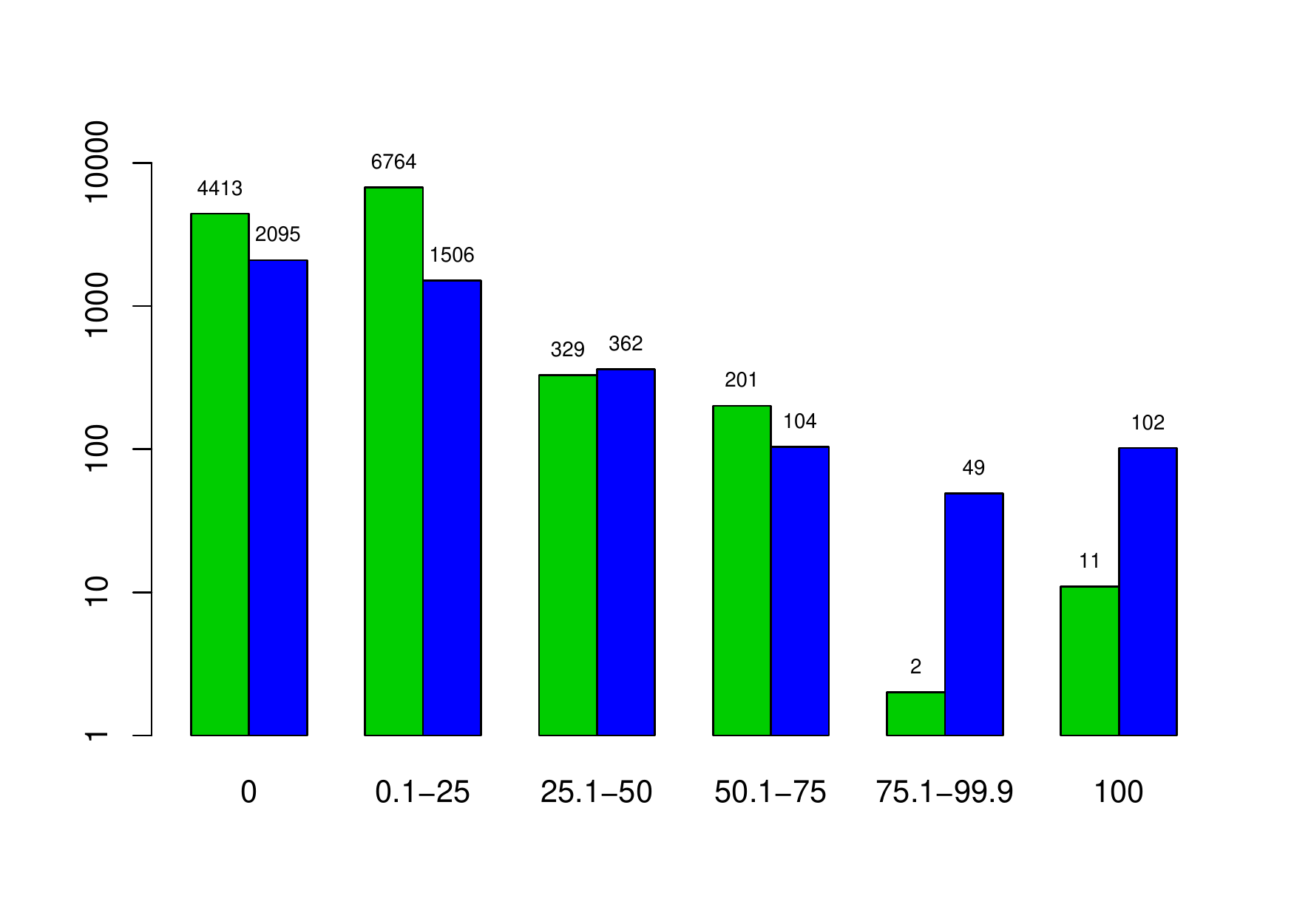}
\end{minipage} %
\begin{minipage}{.429\textwidth} %
\includegraphics[width=\textwidth]{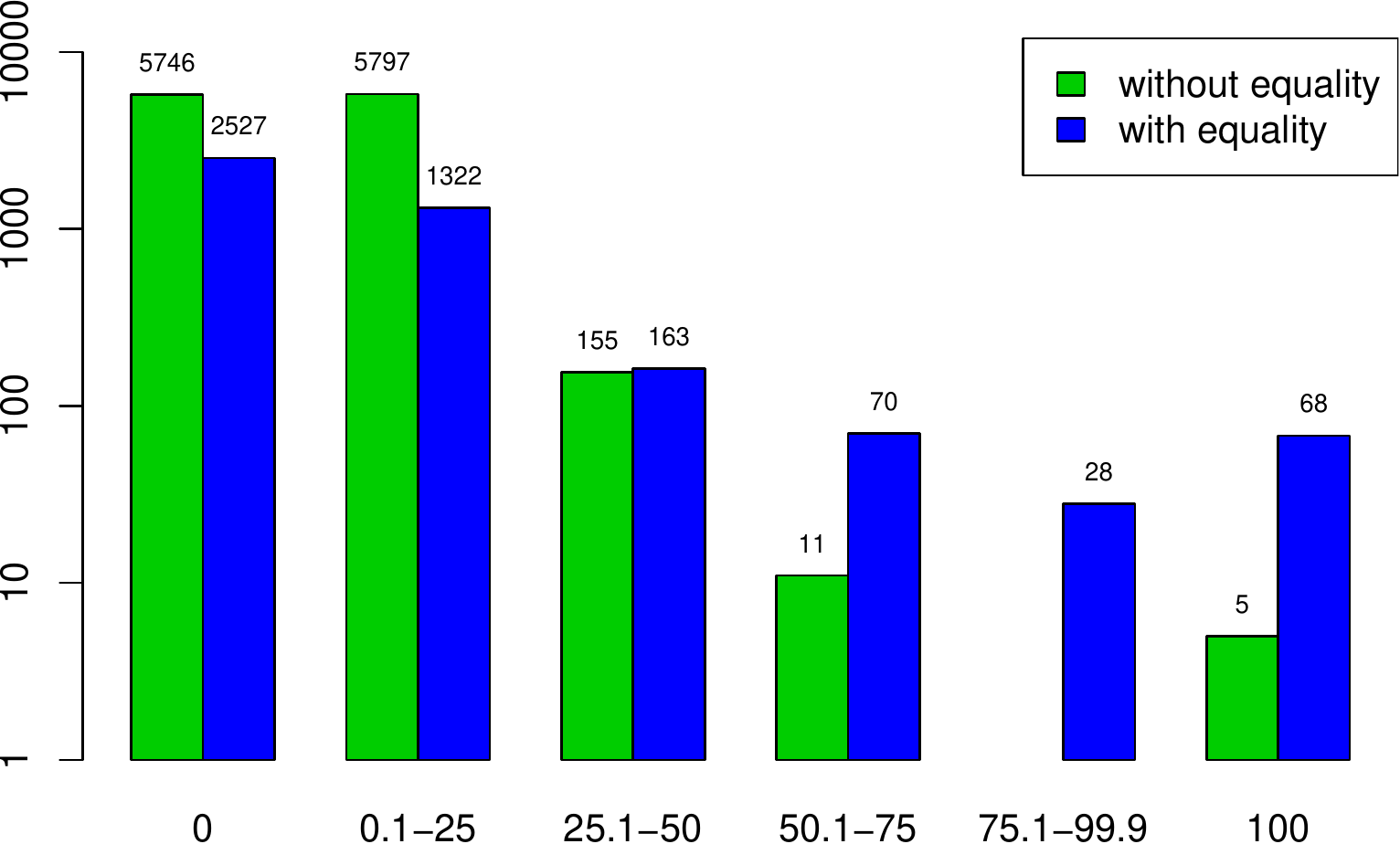}
\end{minipage}
\caption{Histogram of problems by the percentage of blocked clauses eliminated. 
BCE applied to formulas after simple clausification (left) and after clausification preceded by \ppe{} and \ude{} (right). }
\label{fig:dist}
\end{figure}

In total, the benchmarks correspond to \num{299379591} clauses.
BCE removes \SI{11.72}{\percent} of these clauses, while independently processing the 
problems with \ppe{} and \ude{} before clausification leads to \SI{7.66}{\percent} fewer clauses. 
Combining both methods yields a total reduction of \SI{11.73}{\percent}.
Hence, the number of clauses which can be effectively removed by \ude{} but not by BCE 
or which can only be removed by BCE after some other clauses have been effectively removed by \ude{}
is in the order of \SI{0.01}{\percent}.

Out of the \num{15941} benchmarks, \SI{59}{\percent} contain a blocked clause after simple clausification
and \SI{48}{\percent} of these benchmarks contain a blocked clause if first processed by \ppe{} and \ude{}.
Figure~\ref{fig:dist} shows the detailed distribution of eliminated blocked clauses.
With \ppe{} and \ude{} disabled, more than \SI{25}{\percent} of the clauses could be eliminated 
in over \num{1000} problems. Moreover, \num{113} satisfiable formulas were directly solved by BCE,
which means that BCE rendered the input empty. After applying \ppe{} and \ude{}, which directly solve
46 problems, subsequent BCE can directly solve 73 other problems. There are two problems which
can only be directly solved by the combination of \ppe{}, \ude{} and BCE.



 



\paragraph{Impact on Proving Performance.} To measure the effect of BCE on 
recent theorem provers, we considered the three best 
different\footnote{Actually, Vampire~4.1 was ranked second, but we did not 
include it, as it is just an updated version of 
Vampire~4.0.}
systems of the main FOF division of the 
2016 CASC competition \cite{casc2016}:
Vampire~4.0, E~2.0, and CVC4~1.5.1.
%
Instead of running the provers in competition configurations,
which are in all three cases based on a portfolio of strategies and thus lead to results 
that tend to be hard to interpret (c.f.\ \cite{DBLP:conf/cade/Reger0V14}), 
we asked the respective developers to provide a single representative strategy good for proving theorems by their prover
and then used these strategies in the experiment.\footnote{\label{footnote_strategies}
The strategies are listed in Appendix~\ref{sec_strategies}.}

We combined \vampire{} as a clausifier with the three individual provers using the unix pipe construct.
The clausification included \ppe{} and \ude{} (enabled by default in \vampire{}) 
and either did or did not include BCE. 
We set a time limit of \SI{300}{\second} for the whole combination,
so the possible time overhead incurred by BCE implied shorter time left for actual proving.
We ran the systems on the \num{7619} problems established above on which BCE eliminates at least one clause.










\setlength{\tabcolsep}{3pt}
\begin{table}
\caption{Effect of blocked-clause elimination (BCE) on theorem proving strategies. 
Bold: numbers of solved problems without BCE, 
positive (negative): problems gained (lost) by using BCE.}
\label{table:proving_results}
\begin{center}
\begin{tabular}{l||rrr|rrr|rrr}
        & \multicolumn{3}{c|}{unsatisfiable} & \multicolumn{3}{c|}{satisfiable} & \multicolumn{3}{c}{total} \\
\hline
Vampire   & \bf 3172 & $-28$ & $+40$ &
                \bf 458 & $-0$ & $+5$ & 
                \bf 3630 & $-28$ & $+45$  \\
E & \bf 3097  & $-20$ & $+27$ &
       \bf 363 & $-1$ & $+9$ &
       \bf 3460 & $-21$ & $+36$ \\
CVC4 & \bf 2930 & $-18$ & $+37$ &
       \bf 9  & $-0$ & $+68$ &
      \bf 2939 & $-18$ & $+105$ \\
\hline
\end{tabular}
\end{center}
\end{table}

Table~\ref{table:proving_results} shows the numbers of solved 
problems without BCE and the difference when BCE is enabled. 
We can see that on satisfiable problems, 
BCE allows every prover to find more solutions;
the most notable gain is observed with CVC4.
BCE also enables each prover to solve new unsatisfiable problems,
but there are problems that cannot be solved anymore (with the preselected strategy)
 when BCE is activated.
Although the overall trend is that using BCE pays off,
the existence of the lost problems is slightly puzzling.
For a majority of them, the time taken to perform BCE is negligible 
and thus cannot explain the phenomenon. Moreover,
proofs that would make use of a blocked clause,
although they do sometimes occur, are quite rare.\footnote{
For \num{51} of the \num{3172} problems shown unsatisfiable by Vampire, the corresponding proof contained a blocked clause. However,
none of these problems were among the 28 which Vampire did not solve after applying BCE. 
}
Our current explanation thus
appeals to the inherently ``fragile'' nature
of the search spaces traversed by a theorem prover, in which the presence
of a clause can steer the search towards a proof even if the clause does
not itself directly take part in the proof in the end.


\paragraph{Strategies for Showing Satisfiability.}
Since the previous experiment indicates that BCE can be especially helpful on satisfiable problems,
we decided to test how much it could improve strategies explicitly designed for establishing satisfiability,
such as finite-model finding. 
This should be contrasted with the previous strategies, which focused on showing theoremhood.
Here we selected three systems successful in the FNT (First-order form Non-Theorems) division of the 2016 CASC competition,
namely Vampire~4.1, iProver~2.5, and CVC4~1.5.1 and again picked representative strategies for each,
this time focusing on satisfiability detection.\footnote{
See Appendix~\ref{sec_strategies_sat} for the list of strategies selected for each system.} 
The overall setup remained the same, with a time limit of \SI{300}{\second}.

\setlength{\tabcolsep}{3pt}
\begin{table}
\caption{Effect of blocked-clause elimination (BCE) on satisfiability checking strategies. 
Bold: numbers of solved problems without BCE, 
positive (negative): problems gained (lost) by using BCE.}
\label{table:satchecking_results}
\begin{center}
\begin{tabular}{l||rrr|rrr|rrr}
        & \multicolumn{3}{c|}{satisfiable} & \multicolumn{3}{c|}{unsatisfiable} & \multicolumn{3}{c}{total} \\
\hline
Vampire   & \bf 531 & $-0$ & $+24$ &
                \bf 719 & $-4$ & $+5$ & 
                \bf 1250 & $-4$ & $+29$  \\
iProver & \bf 558  & $-0$ & $+1$ &
       \bf 755 & $-6$ & $+4$ &
       \bf 1313 & $-6$ & $+5$ \\
CVC4 & \bf 489 & $-1$ & $+28$ &
       \bf 1724  & $-24$ & $+20$ &
      \bf 2213 & $-25$ & $+48$ \\
\hline
\end{tabular}
\end{center}
\end{table}

Table~\ref{table:satchecking_results} provides results of this experiment. We can see that Vampire and CVC4
detected significantly more satisfiable problems when BCE was used. On the other hand, iProver only solved one
extra satisfiable problem with the help of BCE. The results on unsatisfiable problems, which are not 
specifically targeted by the selected strategies, were mixed, not showing a clear advantage of BCE.

\paragraph{Mock Portfolio Construction.} 

Understanding the value of a new technique within a theorem prover is very hard.
The reason is that---in its most powerful configuration---a theorem prover usually employs a portfolio of strategies 
and each of these strategies may respond differently to the introduction of the new technique.
In fact, a portfolio constructed without regard to the new technique is most likely suboptimal
because the new technique may---due to interactions which are typically hard to predict---give 
rise to new successful strategies that could not be considered previously
(c.f. \cite{DBLP:conf/cade/Reger0V14}, Section 4.4).
In this final experiment, we tried to establish the value of BCE for the construction of a
new strategy portfolio in \vampire{} by emulating the typical first phase of the portfolio construction process,
namely random sampling of the space of all strategies.
Encouraged by the previous experiment, we focused on the construction of a portfolio 
specialized on detecting satisfiable problems.

%
%

We took a subset of \num{302} 
satisfiable problems from the TPTP library that were previously established hard for \vampire{},
and that all contain at least one predicate which is different from equality.
We randomly generated strategies by flipping values of various options that define how the prover attempts to establish satisfiability. Each strategy was cloned into two, one running with BCE as part of the preprocessing and the other without. 
Every such pair of strategies was then run on a randomly selected hard problem with a time limit of \SI{120}{\second}. 
In total, we ran \num{50000} pairs.





Strategies using BCE succeeded \num{8414} times while
strategies not using BCE succeeded \num{6766} times.
There were \num{1796} cases where only the BCE variation succeeded on a problem compared to \num{148} cases where only
the strategy without BCE succeeded. This demonstrates that BCE is a valuable addition to the set of \vampire{} options and
will likely be employed by a considerable fraction, if not all,
of the strategies of the satisfiability checking CASC mode portfolio of the next version of the prover.



\section{Conclusion}
\label{sec:conclusion}



We lifted blocked clauses to first-order logic 
and showed that the presence of 
equality requires a refined notion of blocking for 
guaranteeing redundancy. We proved that checking blockedness 
is possible in polynomial time and,  
based on our theoretical results, implemented blocked-clause elimination 
for first-order logic to showcase a practical application of blocked 
clauses. In our evaluation, we observed that the elimination of blocked 
clauses is beneficial for modern provers in many cases, especially when dealing with satisfiable input formulas.

So far, we only investigated the impact of blocked-clause elimination 
as a stand-alone technique. 
From SAT and QBF research, however, it is known that blocked-clause elimination is
even more powerful in combination with other preprocessing 
techniques~\cite{DBLP:journals/jair/HeuleJLSB15} and so we expect this
to be the case in first-order logic too. 
In particular, the combination of variable elimination and blocked-clause elimination
has shown to be very effective in SAT solving~\cite{BlockedClauseElimination}.
It would therefore be interesting to analyze how the combination of first-order
blocked-clause elimination with the predicate elimination technique of
Khasidashvili and Korovin~\cite{khasidashvili16_fo_predicate_elimination} affects
the performance of theorem provers.
Moreover, since blocked-clause elimination leads to even greater performance
improvements when used not only before but also during SAT and QBF solving~\cite{DBLP:conf/lpar/LonsingBBES15},
a question arises how to integrate it more tightly into the theorem-proving process. 
Besides elimination, there are other applications for blocked 
clauses as well, like  
the addition of (small) blocked clauses or blocked-clause 
decomposition. We expect that such techniques, for which 
this paper lays the groundwork,  can be helpful 
in the context of first-order theorem proving.

%

\section*{Acknowledgements}

We thank Andrei Voronkov for performing the mock portfolio construction experiment.

\newpage
\bibliographystyle{plain}
\bibliography{references}

\newpage

\appendix

\noindent
{\bf \LARGE Appendix}

\section{Polynomial Time Equality-Blocking Check} \label{sec_equational_cplx}

The purpose of this appendix is to argue that the ideas behind Algorithm~\ref{alg_lres_valid} 
presented in Section~\ref{sec:complexity} carry over to the case of equality-blocking, where 
we deal with flat $L$-resolvents and check validity in the presence of equality. Thus we will
also obtain a polynomial time procedure for the equational case.

Because the top-level structure of the procedure along with the main arguments remain unchanged,
we do not repeat them here and instead focus on highlighting the driving analogies 
between the non-equational and equational case. For this purpose let $C = L \lor C'$
be a candidate clause and $D = N_1 \lor \ldots \lor N_n \lor D'$ a partner clause.
We assume, without the loss of generality, that $L = P(\vec{s})$ for a vector of terms $\vec{s}$
and, correspondingly, each $N_i = \neg P(\vec{t_i})$ for a vector of terms $\vec{t_i}$.
A flat $L$-resolvent corresponding to a non-empty set of indexes $I \subseteq \{1,\ldots,n\}$ (i.e., 
a resolvent where the
literals $N_i$ with $i \in I$ are unified with $\bar L$)
can be written as 
\[R_I = C' \lor \bigvee_{i \in I} \vec{s} \foneq \vec{t_i} \lor \bigvee_{i \notin I} N_i \lor D'\]
and it is valid if and only if the ground conjunction of units $\neg \forall R_I$ is unsatisfiable 
in the theory of uninterpreted functions.

If we compare how a transition from $I$ to a larger set of indexes $J \supset I$
is reflected on the corresponding (flat) $L$-resolvents, we observe the following. 
In the non-equational case, $R_J$ has fewer literals than $R_I$,
because those corresponding to indexes $J \setminus I$ are missing,
but is obtained using a unifier which is an instance of the one used for obtaining $R_I$.
In the equational case, $R_J$ has analogously fewer literals $N_i$,
but has more literals of the form $\vec{s} \foneq \vec{t_i}$.
From the perspective of the ``complemented'' presentation, 
$\neg \forall R_J$ has fewer atomic literals $p(\vec{t_i})$ than $\neg \forall R_I$,
but has a larger set of equations $\vec{s} \foeq \vec{t_i}$.

We are now ready to describe the analog of Algorithm~\ref{alg_lres_valid} for the equational case.
First, the condition of the while loop (line~\ref{alg_lres_valid:while_loop}) becomes ``constant true'', 
because in the equational case unification can never fail. 
Next, the condition ``$K = \emptyset$'' (line~\ref{alg_lres_valid:checkvalid}),
which corresponds to ``$R_I$ is not valid'',
can be restated as $\neg \forall R_I$ is satisfiable
and decided by a congruence closure algorithm.
Finally, and this is the sole non-trivial part of the analogy,
we need to realize that if $\neg \forall R_I$ is unsatisfiable 
it is either because already
\[\neg \forall \left( C' \lor \bigvee_{i \in I} \vec{s} \foneq \vec{t_i} \lor D' \right)\]
is unsatisfiable and therefore $\neg \forall R_J$ will be unsatisfiable for any $J \supset I$
(this corresponds to the breaking the loop on line~\ref{alg_lres_valid:while_loop_end})
or there is a single index $i \notin I$ such that 
\[\neg \forall \left( C' \lor \bigvee_{i \in I} \vec{s} \foneq \vec{t_i} \lor N_i \lor D' \right)\]
is unsatisfiable and therefore $\neg \forall R_J$ will be unsatisfiable for any $J \supset I$
for which $i \notin J$ (in this latter case, the loop continues as on line~\ref{alg_lres_valid:branch_yes},
with literal $N_i$ added to the set $N$).
This last observation, more specifically
the fact that two distinct literals $N_i = P(\vec{t_i})$, $i \in I$ and $N_j = P(\vec{t_j})$, $j \in I$ 
cannot be both at the same time necessary 
for unsatisfiability of $\neg \forall R_I$ is left as an exercise for the reader.

\section{A Few More Ideas on the Simulation of \ude{} by BCE} \label{sec_ude}

We start by providing a formally more precise definition of UDE taken from \cite{DBLP:conf/fmcad/HoderKKV12}.
Given a predicate symbol $P$, a formula $\varphi$ such that $P$ does not occur in $\varphi$ 
and polarity $\mathit{pol}\in\{0,1,-1\}$,
a predicate definition  is a formula
\[\mathit{def}(\mathit{pol},P,\varphi) = 
\begin{cases}
\forall \vec{x}.\ P(\vec{x}) \leftrightarrow \varphi(\vec{x}), \quad \text{if } \mathit{pol} = 0, \\
\forall \vec{x}.\ P(\vec{x}) \leftarrow \varphi(\vec{x})    , \quad \text{if } \mathit{pol} = 1, \\
\forall \vec{x}.\ P(\vec{x}) \rightarrow \varphi(\vec{x})    , \quad \text{if } \mathit{pol} = -1.
\end{cases}\]
Assuming we have a predicate definition $\mathit{def}(\mathit{pol},P,\varphi)$ 
as a conjunct within a larger formula 
\begin{equation} \label{formula_Psi}
\Psi = \psi \land \mathit{def}(\mathit{pol},P,\varphi),
\end{equation}
UDE allows us 
\begin{inparaenum}[(a)]
\item 
to drop the definition provided $P$ does not occur in $\psi$ or
\item \label{case_2}
to weaken (from an equivalence to an implication) a definition with $\mathit{pol}=0$ to a one with $\mathit{pol}'\in \{1,-1\}$
provided $P$ only occurs with polarity $-\mathit{pol}'$ in $\psi$.
\end{inparaenum}
UDE preserves satisfiability of a formula \cite{DBLP:conf/fmcad/HoderKKV12}.

We assume there is a clausification procedure $\cnf{}$ which takes as an input a first-order formula 
$\varphi$ and transforms it into set of first-order clauses $\cnf[\varphi]$.
The transformation involves operations such as applying de Morgan and distributivity rules,
expanding equivalences, performing skolemization of existential quantifiers and 
naming subformulas to prevent exponential blow-up \cite{DBLP:journals/jsc/PlaistedG86,DBLP:books/el/RV01/NonnengartW01,GCAI2016:New_Techniques_in_Clausal_Form_Generation}. 
Instead of trying to define in the most general terms what 
properties a ``reasonably behaved'' clausification procedure should satisfy
and then showing that our claim holds for any such procedure, 
we present the main ingredients of our argument in a form of an informal proof script. 
A clausification procedure is ``reasonably behaved'' 
whenever this proof script can be used to show our result for it.

Let us now consider formula $\Psi$ as in \eqref{formula_Psi} and
focus on the case \eqref{case_2} of UDE, where a definition with polarity $\mathit{pol}=0$
can be weakened to one with $\mathit{pol}'= 1$, because $P$ only occurs with polarity $-1$ in $\psi$.
The other cases are similar or simpler. The claim that BCE simulates UDE on the CNF-level in this case means 
that there is a sequence of blocked-clause elimination steps turning $\cnf[\psi \land \mathit{def}(0,P,\varphi)]$ to 
$\cnf[\psi \land \mathit{def}(1,P,\varphi)].$ We would like to prove it along the following lines:
\begin{enumerate}
\item \label{item_equiv_expand}
\[\cnf[\psi \land \mathit{def}(0,P,\varphi)] = 
\cnf[\psi] \cup \cnf[\mathit{def}(1,P,\varphi)] \cup \cnf[\mathit{def}(-1,P,\varphi)]\]
using the fact that an equivalence is translated as a conjunction of two implications,\footnote{
And the fact that universal quantifier are simply dropped when transforming a formula to CNF.}

\item \label{item_nonames1}
every $C \in \cnf[\mathit{def}(-1,P,\varphi)]$ is of the form $C = \neg P(\vec{x}) \lor C'$ for $C' \in \cnf[\varphi(\vec{x})]$
using a property of the clausification procedure; moreover, $C'$ does not contain any literal with predicate symbol $P$,
because $P$ does not occur in $\varphi$,

\item \label{item_nonames2}
similarly, every $D \in \cnf[\mathit{def}(1,P,\varphi)]$ is of the form $D = P(\vec{x}) \lor D'$ for $D' \in \cnf[\neg \varphi(\vec{x})]$ and $D'$ does not contain any literal with predicate symbol $P$,

\item \label{item_polarity}
$\cnf[\psi]$ does not contain a clause with a positive occurrence of a literal with predicate symbol $P$ by assumption,

\item \label{item_lemma}
for every $C' \in \cnf[\varphi(\vec{x})]$ and every $D' \in \cnf[\neg \varphi(\vec{x})]$ the clause $C' \lor D'$ is valid.
\end{enumerate}
Finally, the argument would be closed by observing that every $\neg P(\vec{x}) \lor C' \in \cnf[\mathit{def}(-1,P,\varphi)]$ 
is blocked on the literal $\neg P(\vec{x})$, because its only resolution partners are the clauses $P(\vec{x}) \lor D' \in \cnf[\mathit{def}(1,P,\varphi)]$ and each of them leads to a resolvent which is valid by item \ref{item_lemma}.

While items \ref{item_equiv_expand} and \ref{item_polarity} are easy to justify for any reasonable implementation of $\cnf{}$,
formula naming can interfere with the argument behind items \ref{item_nonames1} and \ref{item_nonames2}, and, on top of that,
item \ref{item_lemma} does not work if a skolemisation step needs to be performed 
when clausifying $\varphi(\vec{x})$ or $\neg \varphi(\vec{x})$.
Let us now look more closely at these two caveats.


A clausification procedure may decide to \emph{name} a subformula to prevent, in the worst case, exponential blow-up
stemming from the distributivity of conjunctions over disjunctions (c.f.\ \cite{tseitin_enc}). This is actually achieved
by introducing a new predicate symbol -- the name -- and adding a predicate definition (!) for the name and the subformula.
If a subformula $\chi$ of $\varphi$ is named by $\cnf$, items \ref{item_nonames1} and \ref{item_nonames2} no longer 
hold as stated, because clauses from the definition of $\chi$ will not be of the form $(\neg) P(\vec{x}) \lor C'$,
but rather of the form $(\neg) R(\vec{y}) \lor C'$, where $R$ is the name introduced for $\chi$.
This is ultimately not a problem, since clauses of $\cnf[\mathit{def}(\mathit{pol},R,\chi)]$ will (recursively) become blocked
by the same argument, once $R$ does not occur anywhere else in clausified formula. However, item \ref{item_lemma}
is endangered unless the clausification procedure introduces the same name for $\chi$ a subformula of $\varphi$ and $\neg \varphi$.
The following example illustrates the issue:

\begin{example}
Let us consider the predicate definition $\mathit{def}(0,p,\varphi)$ for $\varphi = (a \land b) \lor c$.
One possible clausification of the definition consists of the clauses: 
\[\cnf_1[\mathit{def}(-1,p,\varphi)]  = \{ \neg p \lor a \lor c, \neg p \lor b \lor c \} 
\text{ and }
\cnf_1[\mathit{def}(1,p,\varphi)] = \{ p \lor \neg a \lor \neg b, p \lor \neg c \}.\]
It is easy to check that, in particular, the clauses $\cnf_1[\mathit{def}(-1,p,\varphi)]$ are blocked on the literal $\neg p$.

Naming the subformula $(a \land b)$ in the definition might lead to the following clausification:
\[ 
\cnf_2[\mathit{def}(0,p,r \lor c) \land \mathit{def}(0,r,a \land b)]= 
\{\neg p \lor r \lor c, p \lor \neg r, p \lor c, \neg r \lor a, \neg r \lor b, r \lor \neg a \lor \neg b\},
\]
in which the clauses from $\mathit{def}(-1,p,r \lor c)$, namely the clause $\neg p \lor r \lor c$,
are blocked on $\neg p$ and after their elimination
the clauses from $\cnf_2[\mathit{def}(-1,r,a \land b)]$, namely $\neg r \lor a$ and $\neg r \lor b$, become blocked on $\neg r$.

However, a clausification procedure which would, for instance, introduce a name for $(a \land b)$ only for the sake of 
$\mathit{def}(-1,p,\varphi)$ and not for $\mathit{def}(1,p,\varphi)$, or vice versa, or which would 
introduce two distinct names and corresponding definitions, one for the positive and one for the negative occurrence of the subformula, would still be correct, but the result could not simplified as claimed above by BCE.
The clausification could then look, for instance, as follows: $\cnf_3[\mathit{def}(0,p,\varphi)] = 
\cnf_3[\mathit{def}(-1,p,r \lor c) \land \mathit{def}(-1,r, a \land b)] \cup \cnf_3[\mathit{def}(1,p,(a\land b) \lor c)]=
\{\neg p \lor r \lor c, \neg r \lor a, \neg r \lor b, p \lor \neg a \lor \neg b, p \lor \neg c\}.$
Here, the clause $\neg p \lor r \lor c$ does not resolve to a valid clause with $p \lor \neg a \lor \neg b$.
\end{example}

While it is straightforward to show that item \ref{item_lemma} holds for any reasonable clausification procedure $\cnf$
whenever only propositional rules are applied and formula names, if introduced, are shared between the two polarities of $\varphi$ 
as discussed above,
this item no longer holds when formula $\varphi$ contains a quantifier and a skolemization step becomes necessary
(as already also shown in the main text):
\begin{example} 
Consider a predicate definition $\mathit{def}(0,P,\varphi)$ for the formula: $\varphi(x) = \exists y. Q(x,y)$.
The definition gets clausified as: \[\cnf[\mathit{def}(0,P,\varphi)] = \{ \neg P(x) \lor Q(x,f(x)), P(x) \lor \neg Q(x,y)\},\]
where $f$ is a Skolem function introduced for the sake of the existential quantifier in $\varphi$.
The resolvent $Q(x,f(x)) \lor \neg Q(x,y)$ of the two defining clauses on $(\neg)P(x)$ is not valid
and so the first clause is not blocked on $\neg P(x)$. (Resolving on the second literal leads to a valid clause here, but recall
we do not exclude the possibility of predicate $Q$ occurring also elsewhere in the formula.)
\end{example} 

To sum up, under certain reasonable conditions, which we did not specify formally,
imposed on the clausification procedure,
BCE on the CNF-level simulates UDE from the formula level
provided the definition in question does not contain a quantifier.
It should be clear, on the other hand, that BCE 
can eliminate more than just certain predicate definitions,
simply because the input formula can already be in CNF
to which UDE obviously cannot, in general, apply.

\section{Theorem Proving Strategies Used in the Experiment} \label{sec_strategies}

\subsection{Vampire 4.0}
\texttt{./vampire -t 300 -sa discount -awr 10}

\subsection{E 2.0}
\texttt{./eprover -s --simul-paramod --forward-context-sr \textbackslash \\
--destructive-er-aggressive --destructive-er -tKBO6 \textbackslash\\
 -winvfreqrank -c1 -Ginvfreq -F1 \textbackslash\\ -WSelectMaxLComplexAvoidPosPred \textbackslash\\ -H'(1.ConjectureGeneralSymbolWeight\textbackslash\\(SimulateSOS,488,104,105,32,173,0,327,3.6,1.4,1),\textbackslash\\1.FIFOWeight(PreferProcessed),\textbackslash\\8.Clauseweight(PreferUnitGroundGoals,1,1,0.5),\textbackslash\\3.Refinedweight(PreferGoals,2,4,7,5,6.6),\textbackslash\\2.ConjectureRelativeSymbolWeight\textbackslash\\(ConstPrio,0.06,67,160,111,25,3.1,2.8,1))'}

\subsection{CVC4 1.5.1}
\texttt{./cvc4 --full-saturate-quant} 

\section{Strategies for Testing Satisfiability Used in the Experiment}
\label{sec_strategies_sat}

\subsection{Vampire 4.1}
\texttt{./vampire -t 300 -sa fmb} 

\subsection{iProver 2.5}
\texttt{./iproveropt --sat\_mode true --schedule none \textbackslash\\ --sat\_finite\_models true}

\subsection{CVC4 1.5.1}
\texttt{./cvc4 --finite-model-find --fmf-inst-engine --sort-inference \textbackslash\\
 --uf-ss-fair}

\end{document}